\newtheorem{theorem}{Theorem}
\newcommand{\sysname}{D-Spot\xspace}
\newcommand{\para}[1]{\smallskip\noindent\textbf{#1}\xspace}
\newcommand{\subb}[1]{\mathcal{B}}
\newcommand{\user}[1]{u_{#1}}
\newcommand{\subg}{\mathcal{G}}
\newcommand{\subghat}{\hat{\subg}}
\newcommand{\subv}{\mathcal{V}}
\newcommand{\sube}{\mathcal{E}}
\newcommand{\sscoret}{$\mathcal{S}$-score}
\newcommand{\sscore}[2]{\mathcal{S}_{#1,#2}}
\newcommand{\sscoreij}{\mathcal{S}_{i,j}}
\newcommand{\sscorei}{\mathcal{S}_{i}}
\newcommand{\fscore}[1]{\mathcal{F}_{#1}}
\newcommand{\fscoreg}{\mathcal{F}_{\subg}}
\newcommand{\iscorek}[3]{\mathcal{I}_{#1,#2}^{#3}}
\newcommand{\pkx}{p^{k}(a)}
\begin{document}
	
\title{No Place to Hide: Catching Fraudulent Entities in Tensors}

\author{Yikun Ban}
\authornote{Yikun did the project during his visit at Tsinghua University.}
\affiliation{%
  \institution{Peking University}
  \streetaddress{P.O. Box 1212}
  \postcode{100084}
}
\email{banyikun@pku.edu.cn}

\author{Xin Liu}
\affiliation{%
	\institution{Tsinghua University}
	\streetaddress{P.O. Box 1212}
	\postcode{100084}
}
\email{liuxin16@mails.tsinghua.edu.cn}


\author{Yitao Duan}
\orcid{1234-5678-9012}
\affiliation{%
	\institution{Fintec.ai}
	\streetaddress{P.O. Box 1212}
	\postcode{100084}
}
\email{duan@fintec.ai}

\author{Xue Liu}
\affiliation{%
	\institution{McGill University}
	\streetaddress{P.O. Box 1212}
	\postcode{}
}
\email{xueliu@cs.mcgill.ca}

\author{Wei Xu}
\affiliation{%
	\institution{Tsinghua University}
	\streetaddress{P.O. Box 1212}
	\postcode{100084}
}
\email{weixu@tsinghua.edu.cn}


 \vspace{-0.5em}

\pagestyle{fancy}

\begin{abstract}
Many approaches focus on detecting dense blocks in the tensor of multimodal data to prevent fraudulent entities (e.g., accounts, links) from retweet boosting, hashtag hijacking, link advertising, etc. However, no existing method is effective to find the dense block if it only possesses high density on a subset of all dimensions in tensors.
In this paper, we novelly identify dense-block detection with dense-subgraph mining, by modeling a tensor into a weighted graph without any density information lost. Based on the weighted graph, which we call information sharing graph (ISG), we propose an algorithm for finding multiple densest subgraphs, D-Spot, that is faster (up to 11x faster than the state-of-the-art algorithm) and can be computed in parallel. In an N-dimensional tensor, the entity group found by the ISG+D-Spot is at least 1/2 of the optimum with respect to density, compared with the 1/N guarantee ensured by competing methods. We use nine datasets to demonstrate that ISG+D-Spot becomes new state-of-the-art dense-block detection method in terms of accuracy specifically for fraud detection.

\end{abstract}

\begin{CCSXML}
<ccs2012>
<concept>
<concept_id>10002951.10002952.10003219.10003221</concept_id>
<concept_desc>Information systems~Wrappers (data mining)</concept_desc>
<concept_significance>500</concept_significance>
</concept>
</ccs2012>
\end{CCSXML}

\ccsdesc[500]{Information systems~Wrappers (data mining)}

\keywords{
Dense-block Detection; Graph Algorithms; Fraud Detection}
\maketitle

 \vspace{-0.5em}
\section{Introduction}

Fraud represents a serious threat to the integrity of social or review networks such as Twitter and Amazon, with people introducing fraudulent entities (e.g., fake accounts, reviews, etc.) to gain more publicity/profit over a brief period. For example, on a social network or media sharing website, people may wish to enhance their account's popularity by illegally buying more followers~\cite{Shah2014Spotting}; on e-commerce websites, fraudsters may register multiple accounts to benefit from ``new user'' promotions.

Consider the typical log data generated from a social review site (e.g., Amazon), which contains four-dimensional features: users, products, timestamps, rating scores. These data are often formulated as a tensor, in which each dimension denotes a separate feature and an entry (tuple) of the tensor represents a review action. Based on previous studies~\cite{crossspot, MZOOM}, fraudulent entities form dense blocks (sub-tensors) within the main tensor, such as when a mass of fraudulent user accounts create an enormous number of fake reviews for a set of products over a short period.  Dense-block detection has also been applied to network intrusion detection \cite{MultiAspectForensics, MZOOM}, retweet boosting detection~\cite{crossspot}, bot activities detection \cite{MZOOM}, and genetics applications~\cite{Saha2010Dense, MultiAspectForensics}.

 Various dense-block detection methods have been developed. One approach uses tensor decomposition, such as CP decomposition and higher-order singular value decomposition~\cite{MultiAspectForensics}. However, as observed in \cite{DCUBE}, such methods are outperformed by search-based techniques~\cite{crossspot, DCUBE, MZOOM} in terms of accuracy, speed, and flexibility regarding support for different density metrics. Furthermore, \cite{DCUBE, MZOOM} provide an approximation guarantee for finding the densest/optimal block in a tensor.
 
We have examined the limitations of search-based methods for dense-block detection. First, these methods are incapable of detecting \emph{hidden-densest blocks}. We define a hidden-densest block as one that does not have a high-density signal on all dimensions of a tensor, but evidently has a high density on a subset of all dimensions. Moreover, existing methods neglect the data type and distribution of each dimension on the tensor. Assuming that two dense blocks A and B have the same density, however, A is the densest on a subset of critical features, such as IP address and device ID, whereas B is the densest on some trivial features such as age and gender. Can we simply believe that A is as suspicious as B? Unfortunately, the answer when using existing methods is `yes.'

To address these limitations, we propose a dense-block detection framework and focus on entities that form dense blocks on tensors. The proposed framework is designed using a novel approach. Given a tensor, the formation of dense blocks is the result of value sharing (the behavior whereby two or more different entries share a distinct value (entity) in the tensor). Based on this key point, we propose a novel \emph{Information Sharing Graph (ISG)} model, which accurately captures each instance of value sharing. The transformation from dense blocks in a tensor to dense subgraphs in ISG leads us to propose a fast, high-accuracy algorithm, D-Spot, for determining fraudulent entities with a provable guarantee regarding the densities of the detected subgraphs.

In summary, the main contributions of this study are as follows: 

1) \textbf{[Graph Model].} We propose the novel ISG model, which converts every value sharing in a tensor to the representation of weighted edges or nodes (entities). 
Furthermore, our graph model considers diverse data types and their corresponding distributions based on information theory to automatically prioritize multiple features.

2) \textbf{[Algorithm].}  We propose the \emph{D-Spot} algorithm, which is able to find multiple densest subgraphs in one run. And we theoretically prove that the multiple subgraphs found by D-Spot must contain some subgraphs that are at least 1/2 as dense as the optimum. In real-world graphs, D-Spot is up to $11 \times$ faster than the state-of-the-art competing algorithm.

3) \textbf{[Effectiveness].}  In addition to dense blocks, ISG+D-Spot also effectively differentiates hidden-densest blocks from normal ones. 
In experiments using eight public real-world datasets, ISG+D-Spot detected fraudulent entities more accurately than conventional methods.

 \vspace{-0.5em}

\section{Background}

\subsection{Economics of Fraudsters}

As most fraudulent schemes are designed for financial gain, it is essential to understand the economics behind the fraud.  Only when the benefits to a fraudster outweigh their costs will they perform a scam. 
 
To maximize profits, fraudsters have to share/multiplex different resources (e.g., fake accounts, IP addresses, and device IDs) over multiple frauds.  
For example,~\cite{Jiang2014CatchSync} found that many users are associated with a particular group of followers on Twitter;~\cite{thomas2014dialing} identified that many cases of phone number reuse;~\cite{spamscatter} observed that the IP addresses of many spam proxies and scam hosts fall into a few uniform ranges; and~\cite{COPYCATCH} revealed that fake accounts often conduct fraudulent activities over a short time period.

Thus, fraudulent activities often form dense blocks in a tensor (as described below) because of this resource sharing.
 \vspace{-0.5em}

\subsection{Related Work}

\para{Search-based dense-block detection in tensors.} Previous studies~\cite{MultiAspectForensics, crossspot, MZOOM} have shown the benefit of incorporating features such as timestamps and IP addresses, which are often formulated as a multi-dimensional tensor.    
Mining dense blocks with the aim of maximizing a density metric on tensors is a successful approach. CrossSpot~\cite{crossspot} randomly chooses a seed block and then greedily adjusts it in each dimension until the local optimum is attained. This technique usually requires enormous seed blocks and does not provide any approximation guarantee for finding the global optimum. In contrast to adding feature values to seed blocks,  M-Zoom~\cite{MZOOM} removes feature values from the initial tensor one by one using a similar greedy strategy, providing a $1/N$-approximation guarantee for finding the optimum (where $N$ is the number of dimensions in the tensor). M-Biz~\cite{MBIZ} also starts from a seed block and then greedily adds or removes feature values until the block reaches a local optimum.  Unlike M-Zoom, D-Cube~\cite{DCUBE} deletes a set of feature values on each step to reduce the number of iterations, and is implemented in a distributed disk-based manner. D-Cube provides the same approximation guarantee as M-Zoom.

\para{Tensor decomposition methods.} Tensor decomposition \cite{tensorreview} is often applied to detect dense blocks within tensors \cite{MultiAspectForensics}. Scalable algorithms, such as those described in~\cite{Papalexakis2012ParCube, Wang2015Fast, Shin2015Distributed}, have been developed for tensor decomposition. However, as observed in \cite{crossspot,DCUBE}, these methods are limited regarding the detection of dense blocks, and usually detect blocks with significantly lower densities, provide less flexibility with regard to the choice of density metric, and do not provide any approximation guarantee.

\para{Dense-subgraph detection.} A graph can be represented by a two-dimensional tensor, where an edge corresponds to a non-zero entry in the tensor. The mining of dense subgraphs has been extensively studied \cite{GraphSurvey}. Detecting the densest subgraph is often formulated as finding the subgraph with the maximum average degree, and may use exact algorithms~\cite{Goldberg1984Finding,onfinddense} or approximate algorithms~\cite{greedy, onfinddense}. Fraudar~\cite{FRAUDAR} is an extended approximate algorithm that can be applied to fraud detection in social or review graphs. CoreScope~\cite{CoreScope} tends to find dense subgraphs in which all nodes have a degree of at least $k$.   Implicitly, singular value decomposition (SVD) also focuses on dense regions in matrixes. EigenSpoke~\cite{SPOKEN} reads scatter plots of pairs of singular vectors to find patterns and chip communities, \cite{Chen2012Dense} extracts dense subgraphs using a spectral cluster framework, and \cite{Shah2014Spotting,Jiang2016Inferring} use the top eigenvectors from SVD to identify abnormal users.

\para{Other anomaly/fraud detection methods} 
The use of belief propagation \cite{FRAUDEAGLE, NETPROBE} and HITS-like  ideas \cite{COMBATING,Jiang2014CatchSync,UNDERSTAN} is intended to catch rare behavior patterns in graphs. Belief propagation has been used to assign labels to the nodes in a network representation of a Markov random field \cite{FRAUDEAGLE}.  When adequate labeled data are available, classifiers can be constructed based on multi-kernel learning~\cite{Abdulhayoglu2017HinDroid}, support vector machines \cite{Tang2009Machine}, and $k$-nearest neighbor \cite{KNEAR} approaches.

\vspace{-0.5em}

\section{Definitions and Motivation}
In this section, we introduce the notations and definitions used throughout the paper, analyze the limitations of existing approaches, and describe our key motivations.

\vspace{-0.5em}
\begin{table}
\footnotesize
\caption{ ISG+D-Spot vs. existing dense-block detection methods.}
\vspace{-0.7em}
\begin{tabular}{c|ccccc|l}
\toprule
  &\rotatebox{90}{MAF \cite{MultiAspectForensics}}  &\rotatebox{90}{CrossSpot \cite{crossspot}} &\rotatebox{90}{M-zoom \cite{MZOOM}}& \rotatebox{90}{M-biz \cite{MBIZ}} &\rotatebox{90}{D-cube\cite{DCUBE}} & \rotatebox{90}{ISG+D-Spot} \\
\midrule
Applicable to N-dimensional data?&$\surd$&$\surd$&$\surd$&$\surd$&$\surd$&$\surd$\\
Catch densest blocks? & $\surd$ & $\surd$ & $\surd$ & $\surd$ & $\surd$ & $\surd$  \\
Catch hidden-densest blocks? &$\times$ & $\times$ & $\times$ & $\times$ & $\times$ &$\surd$ \\
 \%-Approximation Guarantee? & $\times$& $\times$ &1/N& $\times$ & 1/N & 1/2  \\
\bottomrule
\end{tabular}
\centering
\vspace{-1em}
\end{table}

 \vspace{-0.5em}
\begin{table}
\footnotesize
	\caption{Symbols and Definitions}
	 \vspace{-1em}
	\centering
	\label{tab:freq}
	\begin{tabular}{c|cl}
		\toprule
		Symbol & Interpretation\\
		\midrule
		$N$ & number of dimensions in a tensor\\
		$[N]$ & set \{1,..., N\}\\
		$\mathbf{R}(A_1, ... , A_N, X)$&  relation representing a tensor\\
		$A_n$& $n$-th dimensional values of $\mathbf{R}$\\
		$\mathbf{R}_n$& set of distinct values of $A_n$ of $\mathbf{R}$\\ 
		$t = (a_1, ... , a_N, x)$ & an entry (tuple) of $\mathbf{R}$\\ 
		$ \mathcal{B}(\mathcal{A}_1, ..., \mathcal{A}_N, \mathcal{X})$ & a block in $\mathbf{R}$\\
		$\mathcal{B}_n$& set of distinct values of $\mathcal{A}_n$ of $\mathcal{B}$\\
		$U$ & target dimension in $\mathbf{R}$\\
		$\mathbf{V} = \{u_1, ... , u\}$& set of distinct values of $U$\\  
		$\mathbf{G}=(\mathbf{V}, \mathbf{E})$& Information Sharing Graph \\
		$\sscoreij$ & $\mathcal{S}$-score between $u_i$ and $u_j$\\
		$\sscorei$ & $S$-score of $u_i$\\
		$\subg = (\subv, \sube)$ & subgraph in $\mathbf{G}$\\
 		 $\mathcal{F}$ & density metric \\
		\bottomrule
	\end{tabular}
	 \vspace{-1em}
\end{table}

\subsection{Notation and Formulations}
Table 2 lists the notations used in this paper.
We use $[N] = \{1,..., N\}$ for brevity. 
Let $\mathbf{R}( A_1, ..., A_N, X) =  \{t_0, ... , t_{|X|}\}$ be a relation with $N$-dimensional features, denoted by $\{A_1, ..., A_N\}$, and a dimensional entry identifiers, denoted by $X$.
For each entry (tuple) $t \in \mathbf{R}$, $t = (a_1, ... , a_N, x)$, where  $\forall n \in [N]$, we use $t[A_n]$ to denote the value of $A_n$ in $t$, $t[A_n] = a_n$ and $t[X]$ to denote the identification of $t$, $t[X] = x$, $x \in X$. 
We define the mass of $\mathbf{R}$ as $|\mathbf{R}|$, which is the total number of such entries, $|\mathbf{R}| = |X|$.
For each $n \in  [N]$,  we use $\mathbf{R}_n$ to denote the set of distinct values of $A_n$. 
Thus, $\mathbf{R}$ naturally represents an $N$-dimensional tensor of size $|\mathbf{R}_1| \times ... \times |\mathbf{R}_N|$. 

A block $\mathcal{B}$ in $\mathbf{R}$ is defined as $\mathcal{B}(\mathcal{A}_1, ..., \mathcal{A}_N, \mathcal{X}) = \{t \in \mathbf{R}: t[X] \in \mathcal{X}\}$ and $\mathcal{X} \subseteq X$. Additionally, the mass $|\mathcal{B}|$ is the number of entries of $\mathcal{B}$ and $\mathcal{B}_n$ is the set of distinct values of $\mathcal{A}_n$. Let $\mathcal{B}(a, A_n) =\{t \in \mathbf{R}: t[A_n] = a\}$ represent all entries that take the value $a$ on $A_n$. The mass $|\mathcal{B}(a, A_n)|$ is the number of such entries. A simple example is given as follows.
\smallskip

\noindent \textbf{Example 1} (Amazon review logs). \textit{Assume a relation $\mathbf{R}(\underline{user} , \underline{product},$ $ \underline{timestamp}, X)$, where $\forall t \in \mathbf{R}$,  $t = (a_1, a_2, a_3, x)$ indicates a review action where user $a_1$ reviews product $a_2$ at timestamp $a_3$, and the identification of the action is $x$. Because $a_1$ may review $a_2$ at $a_3$ (we assume that $a_3$ represents a period) multiple times, $X$ helps us distinguish each such action.
The mass of $\mathbf{R}$, denoted by $|\mathbf{R}|$, is the number of all review actions in the dataset. The number of distinct users in $\mathbf{R}$ is $|\mathbf{R}_1|$.
A block $\mathcal{B}(a_1, user)$ is the set of all rating entries operated by user $a_1$, and the number of overall entries of $\mathcal{B}(a_1,user)$ is $|\mathcal{B}(a_1, user)|$}. 
\smallskip

First, we present a density metric that is known to be useful for fraud detection \cite{MZOOM, DCUBE}: \smallskip
 \vspace{-0.5em}
\begin{definition}  (Arithmetic Average Mass $\rho$). 
Given a block \\ $\mathcal{B}(\mathcal{A}_1, ... , \mathcal{A}_N, \mathcal{X})$, the arithmetic average mass of $\mathcal{B}$ on dimensions $\mathcal{N}$ is 
\begin{displaymath}
\rho(\mathcal{B}, \mathcal{N}) = \frac{|\mathcal{B}|}{ \frac{1}{|\mathcal{N}|} \sum_{n \in \mathcal{N}} |\mathcal{B}_n|},
\end{displaymath}
where $\mathcal{N}$ is a subset of $[N]$ and obviously $\rho \in [1.0, +\infty)$.
\end{definition}
 \vspace{-0.5em}

\emph{ If block $\mathcal{B}$ is dense in $\mathbf{R}$, then $\rho(\mathcal{B}, [N])> 1.0$.}

Other density metrics listed in \cite{DCUBE} are also effective for fraud detection. It is broadly true that all density measures are functions of the cardinalities of the dimensions and masses of $\mathcal{B}$ and $\mathbf{R}$. In $\mathbf{R}$, previous studies~\cite{crossspot,DCUBE,MBIZ,MZOOM} have focused on detecting the top-$k$ densest blocks in terms of a density metric. In the remainder of this paper, we use the density metric $\rho$ to illustrate our key points.


\subsection{Shortcomings of Existing Approaches and Motivation}

In practice, the blocks formed by fraudulent entities in $\mathbf{R}$ may be described by \emph{hidden-densest blocks}. To illustrate hidden-densest blocks, we present the following definitions and examples.\smallskip
 \vspace{-0.5em}
\begin{definition}  In  $\mathbf{R}( A_1, ..., A_N,$ $ X)$, we say that $\mathcal{B}(\mathcal{A}_1, ... ,$ $ \mathcal{A}_N, \mathcal{X})$ is the densest on a dimension $A_n$ if $\rho(\mathcal{B}, \{n\})$ is the maximal value of all possible $\rho(\hat{\mathcal{B}}, \{n\})$, where $\hat{\mathcal{B}}$ is any possible block in $\mathbf{R}$.
\end{definition}
 \vspace{-0.5em}

 \vspace{-0.5em}
\begin{definition} (Hidden-Densest Block). In  $\mathbf{R}( A_1, ..., A_N,$ $ X)$,  $\mathcal{B}(\mathcal{A}_1, ... , $ $\mathcal{A}_N, \mathcal{X})$ is the hidden-densest block if $\mathcal{B}$ is the densest on a small subset of $\{ A_1, ..., A_N \}$.
\end{definition}
 \vspace{-0.5em}

\noindent \textbf{Example 2} (Registration logs). \textit{In a registration dataset with 19 features, fake accounts only exhibit conspicuous resource sharing with respect to the IP address feature.}
\smallskip

\noindent \textbf{Example 3} (TCP dumps). \textit{The DARPA dataset~\cite{AIRFORCE} has 43 features, but the block formed by malicious connections is only the densest on two features.} \smallskip

Thus, catching hidden-densest blocks has a significant utility in the real world. Unfortunately, the problem is intractable using existing approaches \cite{MZOOM,DCUBE,MBIZ,crossspot}. 

First, assuming that the  hidden-densest block  $\mathcal{B}(\mathcal{A}_1, ..., \mathcal{A}_N, \mathcal{X}) $ is only the densest on dimension $A_N$, we have that
\begin{displaymath}
 \rho(\mathcal{B}, [N-1]) = \frac{|\mathcal{B}|}{ \frac{1}{N-1} \sum_{n \in [N-1]} |\mathcal{B}_n|} \approx \rho(\mathcal{B}, [N]) = \frac{|\mathcal{B}|}{ \frac{1}{N} \sum_{n \in [N]} |\mathcal{B}_n|}
\end{displaymath}
when $N$ is sufficiently large. Assuming $\rho(\mathcal{B}, [N-1])$ is very low, then
the methods in \cite{crossspot, MZOOM, DCUBE, MBIZ}, which try to find the block $\mathcal{B}$ that maximizes $\rho(\mathcal{B}, [N])$,  have a limited ability to detect the hidden-densest block.

Second, consider a block $\mathcal{B}$ formed by fraudulent entities, in which $\mathcal{B}$ is only the densest on $\{A_2, A_3, A_5\}$, and thus $\rho(\mathcal{B}, \{2,3,5\})$ is maximal. However,  the techniques of \cite{crossspot, MZOOM,MBIZ,DCUBE} cannot find $\{A_2, A_3, A_5 \}$ when the feature combinations are exploded.

Furthermore, in  $\mathbf{R}( A_1, ..., A_N,$ $ X)$, consider two blocks $\mathcal{B}_1$ and $\mathcal{B}_2$, where $\mathcal{B}_1$  is the densest on $A_i$, $\mathcal{B}_2$ is the densest on $A_j$, and $\rho(\mathcal{B}_1, [N]) = \rho(\mathcal{B}_2, [N])$. Does this indicate that $\mathcal{B}_1$ and $\mathcal{B}_2$ are equally suspicious? No, absolutely not, because $A_i$ could be the IP address feature and $A_j$ could be a trivial feature such as the user's age, location, or gender.    

\para{[Value Sharing]}. Based on the considerations above, we design our approach from a different angle.  The key reason behind the formation of dense blocks is value sharing.  Given $t_1 \in \mathbf{R}$, a dimension $A_n$, and $t_1[A_n] = a$, we can identify value sharing when $\exists t_2 \in \mathbf{R}$, $t_2 \neq t_1$, and $t_2[A_n] = a$.     
\smallskip

Obviously, if a block $\mathcal{B}$ is dense, $\rho(\mathcal{B}, [N]) > 1.0$, then value sharing must be occurring, i.e., \textbf{value sharing results in dense blocks}. 
 
Therefore, detecting dense blocks is equivalent to catching value sharing signals. We propose ISG based on information theory and design the D-Spot algorithm to leverage graph features, allowing us to catch fraudulent entities within dense blocks and overcome the limitations mentioned above.

\begin{figure}
\centering
 \includegraphics[width=1.0\columnwidth]{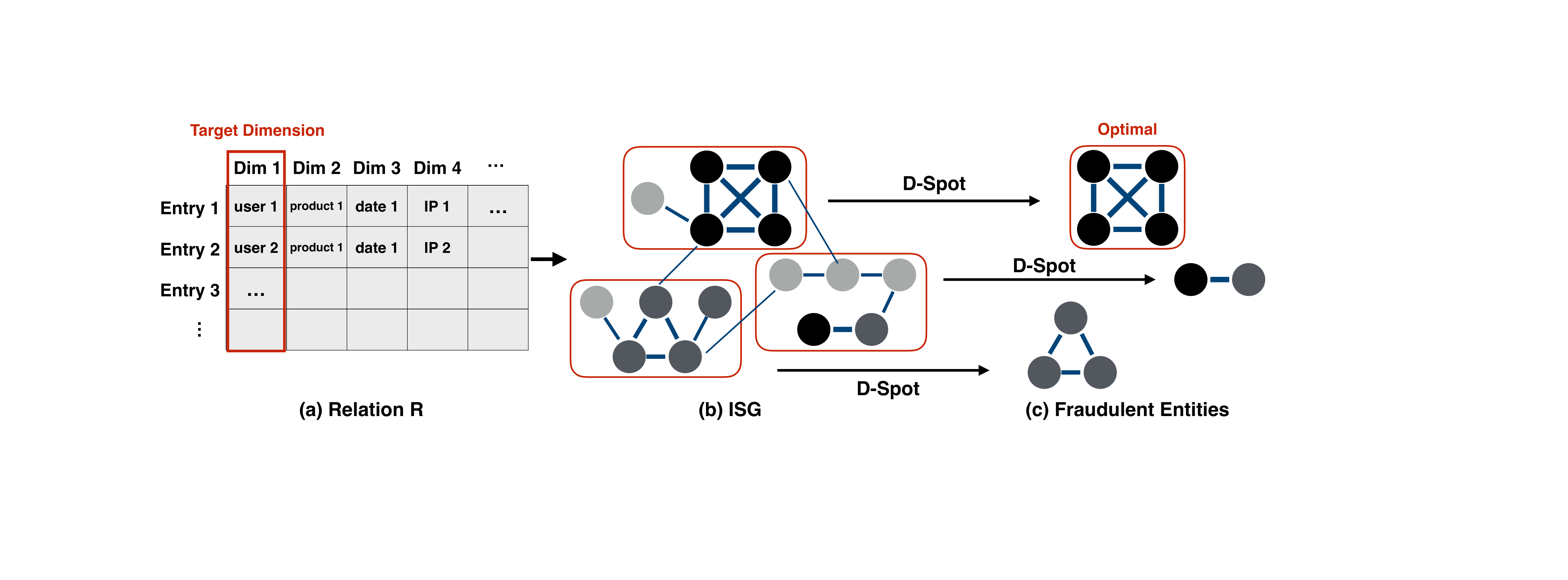}
  \vspace{-0.2cm}
 \caption{Workflow of ISG+D-Spot.} 
 \label{fig:workflow}
 \vspace{-0.4cm}
\end{figure}

\vspace{-0.5em}
\section{ISG Building} \label{sec:method}
In this section, we present the Information Sharing Graph (ISG), which is constructed on the relation $\mathbf{R}$. 

\subsection{Problem Formulation}

Catching fraudulent entities is equivalent to detecting a subset of distinct values in a certain dimension. Let $U$ denote the target dimension in which a subset of distinct values form the fraudulent entities we wish to detect. 
In $\mathbf{R} = ( A_1, ..., A_N, X)$, \textbf{we choose a dimension and set it as $U$, and denote the remaining $(N-1)$ dimensions as $K$ dimensions}, \underline{$k \in [K]$}, for brevity.
We build the ISG of $U$, i.e., the weighted-undirected graph $\mathbf{G}=(\mathbf{V}, \mathbf{E})$, in which $\mathbf{V} = \{u_1, ... , u_n \}$ is the set of distinct values of $U$.

In Example 1, $\mathbf{R}(\underline{user} , \underline{product},$ $ \underline{timestamp}, X)$, we set $U = \underline{user}$ if we wish to detect fraudulent user accounts. In Example 2, we set $U = \underline{account}$ if we would like to identify fake accounts.
In Example 3, we set $U = \underline{connection}$ to catch malicious connections.

To specifically describe the process of value sharing, we present the two following definitions:
 \vspace{-0.5em}
\begin{definition} (Pairwise Value Sharing). Given $u_i, u_j \in \mathbf{V}$ and $ a \in A_k$, we say that $u_i$ and $u_j$ share value $a$ on $A_k$ if $\exists t_1, t_2 \in \mathbf{R}$ such that $t_1[U] = u_i, t_2[U] = u_j$ and $t_1[A_k] = t_2[A_k] = a$.
\end{definition}
 \vspace{-0.5em}

Pairwise value sharing occurs when a distinct value is shared by multiple individual entities. Given a value sharing process in which $a$ is shared by $\mathcal{V} \subset \mathbf{V}$, we denote this as $\frac{|\mathcal{V}|(|\mathcal{V}|-1)}{2}$ pairwise value sharing.

\begin{definition} (Self-Value Sharing). Given $t_1 \in \mathbf{R}$, where $t_1[U] = u_i$, $u_i \in \mathbf{V}$,  and $ t_1[A_k] = a$, we say that $u_i$ shares value $a$ on $A_k$ if $\exists t_2 \in \mathbf{R}$ and $t_2 \neq t_1$ such that $t_2[U] = u_i$ and $ t_2[A_k] = a$.
\end{definition}

Another type of  value sharing occurs when the distinct value $a$ is shared $n$ times by an entity $u_i$, which can be represented by $n$ instances of self-value sharing.

In ISG $\mathbf{G = (V, E)}$, for some edge $(u_i, u_j) \in \mathbf{E}$,  $\sscoreij$ represents the information between $u_i$ and $u_j$ derived from the other $K$ dimensions, and for some node $u_i \in \mathbf{V}$, $\sscorei$ denotes the information of $u_i$ calculated from the other $K$ dimensions. From the definitions and notations defined in the previous section, Problem 1 gives a formal definition of how to build the ISG of a tensor.
\smallskip

\noindent \textbf{Problem 1} (Building a pairwise information graph). 
\textbf{(1) Input:} \textit{a relation $\mathbf{R}$, the target dimension $U$,} \textbf{(2) Output:} \textit{the information sharing graph $\mathbf{G}=(\mathbf{V}, \mathbf{E})$.}

\subsection{Building an ISG} \label{sec:computing}

Given a dimension $A_k$, the target dimension $U$, any $u_i \in \mathbf{V}$, and an entry $t_1 \in \mathbf{R}$ for which $t_1[U] = u_i$, then for each $a \in A_k$, we assume that the probability of $t_1[A_k] = a $ is $\pkx$.

\para{Edge Construction.} \
Based on information theory \cite{INFORMATION}, the self-information of the event that $u_i$ and $u_j$ share $a$ is:
 \vspace{-0.5em}
\begin{equation}
\iscorek{i}{j}{k}(a) = \log (\frac{1}{\pkx})^2. \label{eq:infox}
\end{equation}
To compute the pairwise value sharing between $u_i$ and $u_j$ across all $K$ dimensions, we propose the metric \sscoret \ as the edge weight of ISG:
 \vspace{-0.5em}
\begin{equation}\label{eq:sscore}
\sscore{i}{j}=\sum_{k=1}^{K}\sum_{a  \in H_k(u_i, u_j)} \iscorek{i}{j}{k}(a),
\end{equation}
where $H_k(u_i, u_j)$ is the set of all values shared by $u_i$ and $u_j$ on $A_k$. Note that $\sscore{i}{j} = 0.0$ if $\bigcup_{k=1}^{K} H_k(u_i, u_j) = \emptyset $.

Intuitively, if $u_i$ and $u_j$ \emph{do not} have any shared values, which is to be expected in  normal circumstances, we have zero information.  Otherwise, we obtain some information. Thus, the higher the value of $\sscoreij$ is, the more similar $u_i$ is to $u_j$.  In practice, the \sscoret \ has a large variance. For example, fraud user pairs sharing  an IP subnet and device ID will have a high \sscoret, whereas normal users are unlikely to share these values with anyone, and will thus have an \sscoret \ close to zero. Additionally, the information we obtain for $u_i$ and $u_j$ sharing the value $a$ is related to the overall probability of that value.  For example, it would be much less surprising if they both follow Donald Trump on Twitter than if they both follow a relatively unknown user.  

\para{Node Setting.} For a node $u_i \in \mathbf{V}$,  let $\mathcal{B}(a, A_k, u_i, U)$ be the set $\{t\in \mathbf{R}: (t[A_k] = a) \wedge  (t[U] = u_i)\}$. When $|{\mathcal{B}(a, A_k, u_i, U)}| \geq 2$.  The information of forming $\mathcal{B}(a, A_k, u_i, U)$  is:
 \vspace{-0.5em}
\begin{equation}
\mathcal{I}_i^k(a)  = \log (\frac{1}{\pkx})^{|{\mathcal{B}(a, A_k, u_i, U)}|}. 
\end{equation}
 \vspace{-0.5em}

We now define $\sscorei$ to compute the self-value sharing for $u_i$ across all $K$ dimensions:
 \vspace{-0.5em}
\begin{equation}
\sscorei =\sum_{k=1}^{K}\sum_{\mathcal{B}  \in H_k(u_i)} \mathcal{I}_i^k(a),
\end{equation}
where $H_k(u_i)$ is the set $\{\mathcal{B}(a, A_k, u_i, U), \forall a \in \mathbf{R}_k\}$ and $\mathbf{R}_k$ is the set of distinct values of $A_k$. Note that $\sscorei = 0.0$ if $\bigcup_{k=1}^{K} H_k(u_i) = \emptyset$. 

 In effect, self-value sharing occurs in certain fraud cases. For instance, a fraudulent user may create several fake reviews for a product/restaurant on Amazon/Yelp~\cite{YELP} over a few days. In terms of network attacks~\cite{DARPA}, a malicious TCP connection tends to attack a server multiple times.

\para{Determining [ $\pkx$ ].} \
We can extend the \sscoret \ to accommodate different data types and distributions.

It is difficult to determine $\pkx$, as we do not always know the distribution of $A_k$.  In this case, for dimensions that are attribute features, we assume a uniform distribution and simply set 
\begin{equation}
\pkx = 1 / |\mathbf{R}_k|.
\end{equation}
This approximation works well for many fraud-related properties such as IP subnets and device IDs, which usually follow a Poisson distribution~\cite{crossspot}. 

However, the uniform assumption works poorly for low-entropy distributions, such as the long-tail distribution, which is common in dimensions such as items purchased or users followed.  Low entropy implies that many users behave similarly anyway, independent of frauds.  Intuitively for such distributions, there is no surprise in following a celebrity (head of the distribution), but considerable information if they both follow someone at the tail. 
For example, 20\% of users correspond to more than 80\% of the ``follows'' in online social networks. The dense subgraphs between celebrities and their fans are very unlikely to be fraudulent. 
If feature $A_k$ has a long-tail distribution, its entropy is very low. For example, the entropy of the uniform distribution over 50 values is 3.91, but the entropy of a long-tail distribution with 90\% of probabilities centered around one value is only 0.71.
Therefore, we set $\pkx$ based on the empirical distribution as
\begin{equation}
\pkx =  |{\mathcal{B}(a, A_k)}| / |\mathbf{R}|,
\end{equation}
when the values in $A_k$ have low entropy. We also provide an interface so that users can define their own $\pkx$ function.

\para{Optimization of ISG Construction.} 
 In theory, a graph with $|\mathbf{V}|$ nodes has $O(|\mathbf{V}|^2)$ edges. Naively, therefore, it takes $O(|\mathbf{V}|^2)$ time for graph initialization and traversal. 
 
To reduce the complexity of building the ISG, we use the \emph{key-value} approach. The \emph{key} corresponds to a value $a$ on $A_k$ and the \emph{value} represents the block $\mathcal{B}(a, A_k)$. Let $\mathcal{V} \subseteq \mathbf{V}$  denote the entities that occur in $\mathcal{B}(a, A_k)$.
 As each pair $(u_i, u_j) \in \mathcal{V}$ shares $a$, we increase the value of $\sscoreij$ by $\iscorek{i}{j}{k}(a)$. 
Additionally, for each $u_i \in \mathcal{V}$, there exists some $\mathcal{B}(a, A_k, u_i, U) \subseteq \mathcal{B}(a, A_k)$. Thus, we increase the value of $\sscorei$ by $\mathcal{I}_i^k(a)$ if $|{\mathcal{B}(a, A_k, u_i, U)}| \geq 2$.

To build the ISG, we compute all \emph{key-value} pairs across $K$ dimensions by traversing $\mathbf{R}$ in parallel. Thus, it takes $O(K|\mathbf{R}|+|\mathbf{E}|)$ time to build the graph $\mathbf{G = (V, E)}$. Note that we only retain positive $\sscoreij$ and $\sscorei$. In practice, $\mathbf{G}$ is usually sparse, which is discussed in Section \ref{sec:com}.


 \vspace{-1em}
\subsection{Key Observations on ISG }\label{sec:key_obs}

Given a relation $\mathbf{R} = (A_1, ... , A_N, X)$ in which we set $U = A_N$, we construct the ISG of $U$, $\mathbf{G} = (\mathbf{V}, \mathbf{E})$. Assuming there is a \textbf{fraudulent block} $\mathcal{B}$ in $\mathbf{R}$, $\mathcal{B} = (\mathcal{A}_1, ... , \mathcal{A}_N, \mathcal{X})$ is transformed into a subgraph $\mathcal{G} = (\mathcal{V}, \mathcal{E})$ in $\mathbf{G}$, where $\mathcal{V}$ is the set of distinct values of $\mathcal{A}_N$ and an edge $\sscoreij \in \mathcal{E}$ denotes the information between $u_i$ and $u_j$ calculated from the other $K$ dimensions.  Then, $\mathcal{V}$ is the fraud group comprised of fraudulent entities that we wish to detect.

We summarize three critical observations of $\mathcal{G}$ that directly lead to the algorithms presented in Section \ref{sec:bads}.
Given $\mathcal{G} = (\mathcal{V}, \mathcal{E})$, we define the \textbf{edge density} of $\mathcal{G}$ as 
 \vspace{-0.5em}
\begin{displaymath}
\rho_{edge}(\mathcal{G}) = \frac{|\mathcal{E}|}{|\mathcal{V}|(|\mathcal{V}| - 1)}
\end{displaymath}  
 \vspace{-0.5em}

\para{1) The value of $\sscoreij$ or $\sscorei $ is unusually high.} \
Value sharing may happen frequently, but sharing across certain features, even certain values, is more suspicious than others.  Intuitively, it might be suspicious if two users share an IP address or follow the same random ``nobody'' on Twitter.  However, it is not so suspicious if they have a common gender, city, or follow the same celebrity.  In other words, certain value sharing is likely to be fraudulent because the probability of sharing across a particular dimension, or at a certain value, is quite low. Thus, the information value is high, which is accurately captured by $\sscoreij$ and $\sscorei$.

\para{2) $|\mathcal{V}|$ is usually large}. \
Fraudsters perform the same actions many times to achieve economies of scale.  Thus, we expect to find multiple pairwise complicities among fraudulent accounts.  A number of studies have found that large cluster sizes are a crucial indicator of fraud~\cite{COPYCATCH,SYNCHROTRAP}. Intuitively, while it is natural for a few family members to share an IP address, it is highly suspicious when dozens of users share one.

\para{3) The closer $\rho_{edge}(\mathcal{G})$ is to 1.0, the more suspicious $\mathcal{G}$ is.} \
Fraudsters usually operate a number of accounts for the same job, and thus it is likely that users manipulated by the same fraudster will share the same set of values. Thus, the $\mathcal{G}$ formed by the fraud group will be well-connected.


\para{Appearance of legitimate entities on ISG}. \ In $\mathbf{G} = (\mathbf{V}, \mathbf{E})$, given some $u_i$ that we assume to be \textbf{legitimate}, let $h(u_i)$ denote the set of its neighbor nodes. We have two findings.  
(1) For $u_i$, $\sscorei  + \sum_{u_j \in \mathbf{V}} \sscoreij \rightarrow 0 $ because $u_i$ is unlikely to share values with others. Even if exists,  the shared values should have a high probability (see observation 1) and therefore small $\sscoreij$. 
(2) The subgraph $\mathcal{G}$ induced by $h(u_i)$ is typically not well-connected, as resource sharing is uncommon in the real world. If $\mathcal{G}$ is well-connected, $|h(u_i)|$ is quite small compared with the fraud group size (see observation 2).
\smallskip

In summary, the techniques described in \cite{MultiAspectForensics,crossspot,MZOOM,DCUBE,MBIZ} work directly on the tensor, indicating that they consider value sharing on each dimension, and even certain values, as equivalent. In contrast, ISG assigns each instance of value sharing a theoretical weight based on the edges and nodes of the ISG, which is more effective for identifying the (hidden-) densest blocks (comparison in Sec.\ref{sec:eff}).

%
%

\vspace{-0.5em}
\section{Spotting Fraud}

Based on the observations in Section \ref{sec:key_obs}, we now describe our method for finding objective subgraphs in $\mathbf{G}$.
This section is divided into two parts: first, we define a density metric $\fscoreg$, and then we illustrate the proposed D-Spot algorithm. 

\subsection{Density Metric and Problem Definition}
  
To find the objective $\mathcal{G = (V, E)}$, we define a density metric $\fscoreg$ as \cite{greedy, FRAUDAR}:
\vspace{-0.5em}
\begin{equation}\label{eq:fraudscore}
\fscoreg = \frac{\sum_{(\user{i},\user{j}) \in \sube} \sscoreij + \sum_{u_i \in \mathcal{V} } \sscorei }{|\subv|}.
\end{equation}

\noindent The form of $\fscoreg$ satisfies the three key observations of  $\mathcal{G}$ in Section \ref{sec:key_obs}.
\begin{enumerate}
\item Keeping $|V|$ fixed, we have that $\sum_{(u_i, u_j) \in \mathcal{E}}\sscoreij + \sum_{ u_i \in \mathcal{V} } \sscorei \uparrow \Rightarrow \fscoreg \uparrow$.

\item Keeping $\sscoreij$, $\sscorei$, and $\rho_{edge}(\mathcal{G})$ fixed, we have that $|V| \uparrow \Rightarrow \fscoreg \uparrow$.

\item Keeping  $\sscoreij$, $\sscorei$, and $|\mathcal{V}|$ fixed, we have that $\rho_{edge}(\mathcal{G}) \uparrow \Rightarrow \fscoreg \uparrow$.
\end{enumerate}


Thus, our subgraph-detection problem can be defined as follows:

\noindent \textbf{Problem 2} (Detecting dense subgraphs).\ 
\textbf{(1) Input:} \textit{ the information sharing graph $\mathbf{G= (V, E)}$.} \textbf{(2) Find:} \textit{multiple subgraphs of $\mathbf{G}$ that maximize $\mathcal{F}$.}

 \vspace{-0.5em}
\subsection{D-Spot (Algorithm 1-3)} \label{sec:bads}

In real-world datasets, there are usually numerous fraud groups forming multiple dense subgraphs.  Based on the considerations described above, we propose \sysname (Algorithms 1--3). Compared with other well-known algorithms for finding the densest subgraph~\cite{greedy, FRAUDAR}, \sysname has two differences: 
\begin{enumerate}
\item  \sysname can detect multiple densest subgraphs simultaneously. \sysname first partitions the graph, and then detects a single densest subgraph in each partition. 

\item \sysname is faster. 
First, instead of removing nodes one by one,  \sysname removes a set of nodes at once, reducing the number of iterations.
Second, it detects the single densest subgraph in partition $\mathcal{G = (V, E)}$, rather than in graph $\mathbf{G = (V,E)}$, where $|\mathcal{V}| << |\mathbf{V}|$ and $|\mathcal{E}| << |\mathbf{E}|$. 

\end{enumerate}

D-Spot consists of two main steps:
(1) Given $\mathbf{G}$, divide $\mathbf{G}$ into multiple partitions (Algorithm 1); 
(2) In each $\mathcal{G}$, find a single dense subgraph (Algorithms 2 and 3).

\para{Algorithm 1: graph partitioning.} \ 
Let $\subghat$ be a dense subgraph formed by a fraud group that we are about to detect.
In $\mathbf{G}$, there are usually multiple $\subghat$s, where each $\subghat$ should be independent or connected with others by small values of $\sscoreij$. Thus we let $\mathcal{G}$s be the connected components of $\mathbf{G}$ as partitions (line 6).
For each $\mathcal{G} \in \mathcal{G}s$, we run Algorithms 2 and 3 (lines 7--9) to find $\subghat$. Finally, Algorithm 1 returns multiple dense subgraphs $\subghat$s (line 10). Note that there is a guarantee that $\subghat$s must contain the $\subghat$ that is at least $1/2$ of the optimum of $\mathbf{G}$ in terms of $\mathcal{F}$ (proof in Sec.\ref{sec:acc}).

\para{Information pruning (recommended).} As mentioned before, Fraud entities usually have surprising similarities that are quantified by $\sscoreij$. We want to delete edges with regular weights and thus we provide a threshold for removing edges:
\vspace{-0.5em}
\begin{equation}\label{eq:threshold}
\theta = \frac{\sum_{(u_i, u_j) \in \mathbf{E}} \sscoreij }{|\mathbf{V}|(|\mathbf{V}| -1 )}  
 \end{equation}
It is easy to see that $\theta$ (conservative) is the average information of all possible pairs ($u_i, u_j$). Thus, we iterate through all edges in $\mathbf{G}$, and remove those for which $\sscoreij < \theta$ (lines 3--5). In all experiments of this paper, we used $\theta$ and get the expected conclusion that the performance of D-Spot using $\theta$ is hardly different from no pruning but using $\theta$ is able to significantly decrease the running cost of D-Spot.

%
%

\begin{algorithm}[h]
\caption{ \textit{find multiple dense subgraphs} in $\mathbf{G}$} 
\begin{algorithmic}[1]
\Require  
$\mathbf{G} =(\mathbf{V}, \mathbf{E})$, $\theta$ (Eq.\ref{eq:threshold} ), $w()$ (Eq. \ref{eq:w})
\Ensure $\subghat$s
\State $\subghat$s $\leftarrow \emptyset$ 
\If{needed} \For{ each $\sscoreij \in \mathbf{E}$}
\If{ $\sscoreij < \theta$  }

\State remove $\sscoreij$
\EndIf   
\EndFor \EndIf
\State $\mathcal{G}$s $\leftarrow$ connected components of $\mathbf{G}$
 \For{ each $\mathcal{G}$ $\in$ $\mathcal{G}$s}

\State $\subghat \leftarrow$ \textit{find a dense subgraph} ($\mathcal{G}$, $w()$)

\State $\subghat$s $\leftarrow $ $\subghat$s $\cup  \{ \subghat\}$
\EndFor    
\State \Return $\subghat$s 
\end{algorithmic}  
\label{alg:greedy}
\end{algorithm}

\begin{algorithm}[h]
\caption{\textit{find a dense subgraph}} 
\begin{algorithmic}[1]
\Require  $\mathcal{G = (V, E)}$, $w()$(Eq. \ref{eq:w})
\Ensure $\subghat$
\State  $\mathcal{V}_c \leftarrow copy(\mathcal{V})$
\State $\mathcal{S}_{sum} \leftarrow \sum_{(u_i, u_j) \in \mathcal{E}} \sscoreij + \sum_{u_i \in \mathcal{V}}\sscorei$
\State $ \forall u \in \mathcal{V}, Dict1[u] \leftarrow 0, Dict2[u] = w(u, \mathcal{G}) $
\State $ index \leftarrow 0 $, \ $\mathcal{F}^{max} \leftarrow \frac{\mathcal{S}_{sum}}{|\mathcal{V}_c|} $, \ $top \leftarrow 0 $
\While{$\mathcal{V}_c \neq \emptyset$}
\State $R \leftarrow \{ u \in \mathcal{V}_c: Dict2[u] \leq \frac{2 \sum_{(u_i, u_j) \in \mathcal{E}}\sscoreij + \sum_{u_i \in \mathcal{V}_c} \sscorei }{|\mathcal{V}_c|} \} $ (Eq. \ref{eq:avew})
\State sort $R$ in increasing order of $Dict2[u]$
\For{each $u \in R$}
\State $\mathcal{V}_c \leftarrow \mathcal{V}_c - u$, \ $\mathcal{S}_{sum} \leftarrow \mathcal{S}_{sum} - Dict2[u]$
\State $index \leftarrow index + 1 $, $Dict1[u] \leftarrow index$
\State $\mathcal{F} = \frac{\mathcal{S}_{sum}}{|\mathcal{V}_c|}$
\If{$\mathcal{F}>\mathcal{F}^{max}$ }
\State $\mathcal{F}^{max} \leftarrow \mathcal{F}$, \ $top \leftarrow index$
\EndIf
\State $Dict2 \leftarrow  $ \textit{update edges} $(u, \mathcal{V}_c, Dict2, \mathcal{G})$
\EndFor
\EndWhile
\State $\hat{R} \leftarrow \{u \in \mathcal{V}: Dict1[u] > top \}$

\State \Return $\subghat$ (the subgraph induced by $\hat{R}$) 
\end{algorithmic}  
\label{alg:greedy}
\end{algorithm}

\begin{algorithm}[h]
\caption{ \textit{update edges}} 
\begin{algorithmic}[1]
\Require  
$u_i$, $\mathcal{V}_c$, $Dict2$, $\mathcal{G = (V, E)}$
\Ensure $Dict2$
\For{each $u_j \in \mathcal{V}_c$}
\If{$(u_i, u_j) \in \mathcal{E}$}
\State $Dict2[u_j] \leftarrow Dict2[u_j] - \sscoreij$
\State remove $(u_i, u_j)$ from $\mathcal{E}$
\EndIf
\EndFor
\State \Return $Dict2$
\end{algorithmic}  
\end{algorithm}

\para{Algorithms 2 and 3: find a dense subgraph.}  \label{sec:imp}
Initially, let $\mathcal{V}_c$ be a copy of $  \mathcal{V}$. In each iteration (lines 5--14), we delete \textit{a set of nodes} ($R$, line 6) from $\mathcal{V}_c$ until $\mathcal{V}_c$ is empty. Of all the $\mathcal{V}_c$ constructed during the execution of the algorithm, that maximizing $\mathcal{F}$ ($\hat{R}$, line 15) is returned as the output of the algorithm.

Given a subgraph $\mathcal{G = (V, E)}$, for some $u_i \in \mathcal{V}$, we define $w(u_i, \mathcal{G})$ as 
\begin{equation} \label{eq:w}
w(u_i, \mathcal{G}) = \sum_{(u_j \in \mathcal{V}) \wedge ((u_i, u_j) \in \mathcal{E}) } \sscoreij + \sscorei .
\end{equation}

Lines 1--4 initialize the parameters used in the algorithm. $Dict2$ records the $w$ value of each node.
$Dict1$ records the order in which the nodes are deleted (line 10), which allows us to determine the value of $\hat{R}$ that maximizes $\mathcal{F}$.
Line 6 determines which $R$ are deleted in each iteration.
$R$ is confirmed by $\{u \in \mathcal{V}: w(u, \mathcal{G}) \leq \overline{w}\}$ (line 6), where the average $\overline{w}$ is given by:
\begin{equation} \label{eq:avew}
\begin{split}
\overline{w} &= \frac{\sum_{u \in \mathcal{V}} w(u, \mathcal{G})}{|\mathcal{V}|} \\
& = \frac{2 \sum_{(u_i, u_j) \in \mathcal{E}}\sscoreij + \sum_{u_i \in \mathcal{V}} \sscorei }{|\mathcal{V}|} \leq  2 \mathcal{F}_\mathcal{G}, 
\end{split}
\end{equation}   
because each edge $\sscoreij$ is counted twice in $\sum_{u \in \mathcal{V}} w(u, \mathcal{G})$. In lines 7--14, the nodes in $R$ are removed from $\mathcal{V}_c$ in each iteration (In contrast, \cite{FRAUDAR} recomputes all nodes and finds those with the minimal $w$ after deleting a node). As removing a subset of $R$ may result in a higher value of $\mathcal{F}$, D-Spot records each change of $\mathcal{F}$, as if the nodes were removed one by one (lines 8--14).   
Algorithm 3 describes how the edges are updated after a node is removed, requiring a total of $|\mathcal{E}|$ updates. 
Finally, Algorithm 2 returns the subgraph $\subghat$ induced by $\hat{R}$, the set of nodes achieving $\mathcal{F}^{max}$, according to $top$ and $Dict1$ (line 15).

\para{Summary.}  As $R$ contains at least one node, the worst-case time complexity of Algorithm 2 is $O( |\mathcal{V}|^2 + |\mathcal{E}|)$. In practice, the worst case is too pessimistic. In line 6, $R$ usually contains plenty of nodes, significantly reducing the number of scans of $\mathcal{V}_c$ (see Section \ref{sec:speed}).

\vspace{-0.5em}
\section{Analysis}


\subsection{Complexity.} \label{sec:com}
In the graph initialization stage, it takes $O(K|\mathbf{R}|+\mathbf{|E|})$ time to build $\mathbf{G}$ based on the optimization in Section \ref{sec:computing}. In D-Spot,  the cost of partitioning $\mathbf{G}$ is $O(|\mathbf{E}|)$, and detecting a dense block in a partition $\mathcal{G}$ requires $O(|\mathcal{E}| + |\mathcal{V}|^2)$ operations, where $|\sube| << |\mathbf{E}|$, $|\subv| << |\mathbf{V}|$.
Thus, the complexity of  ISG+D-Spot is linear with respect to $|\mathbf{E}|$.

In the worst case, admittedly, $|\mathbf{E}| = |\mathbf{V}|^2$ when there is some dimension $A_k$ in which $|\mathbf{R}_k| = 1$. However, that is too pessimistic. In the target fraud attacks, fraud groups typically exhibit strong value sharing while legit entities should not. Hence, we expect $\mathbf{G}$ to be sparse because the $u_i$ only have positive edges with a small subset of $\mathbf{V}$. 
We constructed a version of $\mathbf{G}$ using several real-world datasets (see Fig. \ref{fig:time}), and the edge densities were all less than 0.06. 


\vspace{-0.5em}
\subsection{ Effectiveness of ISG+D-Spot } \label{sec:eff}

\begin{theorem} (Spotting the Hidden-Densest Block). \label{theo1}
Given a dense block $\mathcal{B}(\mathcal{A}_1, ... ,$ $ \mathcal{A}_N, \mathcal{X})$  in which the target dimension $U = A_N$ and $\mathcal{V}$ denotes the set of distinct values of $\mathcal{A}_N$,  a shared value $a$ exists in $\mathcal{B}$ such that, $\forall u \in  \mathcal{V}$, $ \exists t \in \mathcal{B}$ satisfying ($t[U] = u) \wedge (t[A_k] = a)$.
Then, $\mathcal{B}$ must form a dense subgraph $\mathcal{G}$ in $\mathbf{G}$.
\end{theorem}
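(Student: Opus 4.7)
The plan is to extract two structural consequences of the hypothesis and then feed them into the definition of $\fscoreg$ to show that $\mathcal{G}$ scores high, assuming $|\mathcal{V}|\geq 2$ so that edges are meaningful at all.

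First, I would observe that the hypothesis is unusually strong: there is a single dimension $A_k$ and a single value $a$ that every entity in $\mathcal{V}$ touches inside $\mathcal{B}$. By Definition 4, this means that for every pair $u_i, u_j \in \mathcal{V}$ we have $a \in H_k(u_i, u_j)$, and hence by Eq.~(\ref{eq:sscore}), $\sscore{i}{j} \geq \iscorek{i}{j}{k}(a) > 0$. Therefore every pair of nodes in $\mathcal{V}$ is connected by a positive-weight edge in $\mathbf{G}$, so the induced subgraph $\mathcal{G}$ is a clique and $\rho_{edge}(\mathcal{G}) = 1$, which already matches observation~3 in Section~\ref{sec:key_obs} exactly. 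Dropping the non-negative $\sscorei$ terms and all contributions to $\sscore{i}{j}$ coming from other dimensions or other shared values then yields the lower bound
\begin{equation*}
\fscoreg \;\geq\; \frac{1}{|\mathcal{V}|}\binom{|\mathcal{V}|}{2}\cdot 2\log\frac{1}{p^k(a)} \;=\; (|\mathcal{V}|-1)\log\frac{1}{p^k(a)},
\end{equation*}
which satisfies observations~1 and~2 as well: $\fscoreg$ grows linearly in $|\mathcal{V}|$ and each surviving edge carries the (typically large) information value $2\log(1/p^k(a))$.

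Next, to give the statement real teeth, I would contrast this with the tensor-based density $\rho(\mathcal{B},[N])$ used in prior work. Because the hypothesis constrains only the single dimension $A_k$, the cardinalities $|\mathcal{B}_n|$ for $n\neq k$ may be arbitrarily large, so $\rho(\mathcal{B},[N])$ can be small even though the shared value $a$ is pervasive; in the language of Definition~3, $\mathcal{B}$ may be a hidden-densest block on $\{A_k\}$. The ISG transformation nevertheless exposes $\mathcal{B}$ as dense on the graph side, because both $\rho_{edge}(\mathcal{G})$ and the lower bound on $\fscoreg$ above depend only on the shared value $a$ and on $|\mathcal{V}|$, not on the cardinalities of the unshared dimensions. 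This is the content that makes the theorem non-trivial in view of Section~3.2.

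The main obstacle I anticipate is pinning down what ``dense subgraph $\mathcal{G}$'' formally means: the paper floats both $\rho_{edge}$ and $\fscoreg$ as density indicators, but no numerical threshold is fixed. I would handle this by promoting the two quantitative claims above, namely $\rho_{edge}(\mathcal{G})=1$ and $\fscoreg \geq (|\mathcal{V}|-1)\log(1/p^k(a))$, to the status of the formal conclusion, and then pointing out that these are exactly the two features that Algorithms~2--3 of D-Spot use when deciding which subgraph to output, so ``dense'' here really does mean ``selected as an output of D-Spot''. A minor subtlety to keep in mind is that when the distribution on $A_k$ is long-tailed, $p^k(a)$ in the bound should be read as the empirical estimate $|\mathcal{B}(a,A_k)|/|\mathbf{R}|$ from Section~\ref{sec:computing}; the bound then remains valid verbatim.
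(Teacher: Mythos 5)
Your proof takes essentially the same route as the paper's: both show that the shared value $a$ puts a positive edge $\sscore{i}{j} \geq \iscorek{i}{j}{k}(a)$ between every pair in $\mathcal{V}$, so the induced subgraph is a clique with $\rho_{edge}(\mathcal{G})=1$ and $\fscoreg$ bounded below by a quantity of the form $(|\mathcal{V}|-1)\,\iscorek{i}{j}{k}(a)$. The only discrepancy is a factor of two in the final bound, arising from whether each undirected edge is counted once (your $\binom{|\mathcal{V}|}{2}$) or twice (the paper's $|\mathcal{V}|(|\mathcal{V}|-1)$, consistent with its definition of $\rho_{edge}$); this is a convention issue, not a gap.
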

\vspace{-1em}

\begin{proof}
Using the optimization algorithm in Section \ref{sec:computing}, we build $\mathbf{G}$ by scanning all values in $\mathbf{R}$ once. Hence, the block $\mathcal{B}(a, A_k)$ must be found.
Let $\mathcal{G = (V, E)}$ be the subgraph induced by $\mathcal{V}$ in $\mathbf{G}$. Then, $\forall (u_i, u_j) \in \mathcal{E}$, the edge $\sscoreij \geq  \iscorek{i}{j}{k}(a)$. Hence, $\rho_{edge}(\mathcal{G}) = 1.0 $ and $\fscoreg =\frac{\sum_{\forall (\user{i},\user{j}) \in \mathcal{E} } \sscoreij + \sum_{u_i \in \mathcal{V}} \sscorei}{|\mathcal{V}|} \geq \frac{|\mathcal{V}|(|\mathcal{V}|-1) \iscorek{i}{j}{k}(a)}{ |\mathcal{V}|} = (|\mathcal{V}|-1)\iscorek{i}{j}{k}(a)$.
\end{proof}
\vspace{-0.5em}

\noindent \textbf{Observation.} (Effectiveness of ISG+D-Spot) Consider a hidden-densest block $\mathcal{B}(\mathcal{A}_1, ..., \mathcal{A}_{N-1},$ $ \mathcal{A}_N, \mathcal{X})$ of size $|\mathcal{X}| \times ... \times |\mathcal{X}| \times  1$ and $|\mathcal{B}| = |\mathcal{X}|$, i.e., $\mathcal{B}$ is the densest on $A_N$ by sharing the value $a$. Then, assuming the target dimension $U = A_1$ and the fraudulent entities $\mathcal{V}$ are distinct values of $\mathcal{A}_1$,   ISG+D-Spot captures $\mathcal{V}$ more accurately than other algorithms based on tensors (denoted as Tensor+Other Algorithms). \\

\vspace{-1em}
\begin{proof}  Let us consider a non-dense block  $\hat{\mathcal{B}}( \hat{\mathcal{A}}_1, ... , \hat{\mathcal{A}}_N, \mathcal{X})$ of size $|\mathcal{X}| \times ... \times |\mathcal{X}|$, $|\hat{\mathcal{B}}| = |\mathcal{X}|$, and let $\hat{\mathcal{V}}$ denote the distinct values of $\hat{\mathcal{A}}_1$.  Denoting legitimate entities as $ \hat{\mathcal{V}}$ and fraudulent entities as $\mathcal{V}$, we now discuss the difference between ISG+D-Spot and Tensor+Other Algorithms.
\smallskip

\noindent [\textit{Working on the tensor}]. On $\mathbf{R}$, $\hat{\mathcal{B}}$ is not dense and thus $\rho(\hat{\mathcal{B}}, [N]) = 1$. For $\mathcal{B}$, because $\{|\mathcal{B}_1|, ..., |\mathcal{B}_{N-1}|, |\mathcal{B}_N|  \} = \{|\mathcal{X}|, ...,|\mathcal{X}|, 1\}$, we have $\rho(\mathcal{B}, [N]) = \frac{|\mathcal{B}| }{\frac{1}{N} \sum_{n\in [N]} |\mathcal{B}_n|} \approx 1 $ for sufficiently large $N$.

\smallskip

\noindent [\textit{Working on ISG}]. On $\mathbf{G}$, let $\hat{\mathcal{G}}$ denote the subgraph induced by $\hat{\mathcal{V}}$ and $\mathcal{G}$ denote the subgraph formed by $\mathcal{V}$.  We know that $\hat{\mathcal{G}}$, $\mathcal{F}_{\hat{\mathcal{G}}} = 0$, because $\hat{\mathcal{B}}$ does not have any shared values. For $\mathcal{G}$, $\mathcal{F}_{\mathcal{G}} = (\mathcal{V}-1)\iscorek{i}{j}{k}(a)$ according to Theorem \ref{theo1}.
\smallskip


\noindent [\textit{Other Algorithms}]. M-Zoom\cite{MZOOM} and D-Cube \cite{DCUBE} are known to find blocks that are at least $1/N$ of the optimum in term of $\rho$ on $\mathbf{R}$ ($\frac{1}{N}$-Approximation guarantee).    \smallskip

\noindent [\textit{D-Spot}]. In Section \ref{sec:acc}, we will show that the subgraph detected by D-Spot is at least $1/2$ of the optimum in term of $\fscoreg$ on ISG ($\frac{1}{2}$-Approximation guarantee). \smallskip

\noindent In summary, Tensor+Other Algorithms  vs. ISG+D-Spot corresponds to:
\begin{displaymath}
\begin{split}
& \bigg ( \rho(\hat{\mathcal{B}}, [N]) = 1 \ | \ \rho(\mathcal{B}, [N]) \approx 1 \ +\ (\frac{1}{N}\text{-Approximation}) \bigg ) \\
 vs. \ &  \bigg (  \mathcal{F}_{\hat{\mathcal{G}}} = 0 \ | \ \mathcal{F}_{\mathcal{G}} = (\mathcal{V}-1)\iscorek{i}{j}{k}(a) \ + \ (\frac{1}{2}\text{-Approximation})   \bigg ) 
\end{split}
\end{displaymath}
Therefore, ISG+D-Spot catches fraudulent entities within hidden-densest blocks more accurately than Tensor+Other Algorithms.
\end{proof}

From the observation, ISG+D-Spot can effectively detect hidden-densest blocks. 
Similarly, when $\mathcal{B}$ becomes denser, the $\mathcal{G}$ formed by  $\mathcal{B}$ will also be much denser, and thus  ISG+D-Spot will be more accurate in detecting the densest block.

\vspace{-0.5em}
\subsection{Accuracy Guarantee of D-Spot} \label{sec:acc}

For brevity, we use $[\mathcal{V}]$ to denote a subgraph induced by the set of nodes $\mathcal{V}$.\\

\vspace{-1em}
\begin{theorem} \label{gua1}  
(Algorithm 1 Guarantee). 
Given $\mathbf{G} = (\mathbf{V}, \mathbf{E}) $,  let $\mathcal{G}$s $= \{\mathcal{G}_1, ... , \mathcal{G}_n\}$ denote the connected components of $\mathbf{G}$. Let $\mathcal{F}_i^{opt}$ denote the optimal $\mathcal{F}$ on $\mathcal{G}_i$, i.e.,
$\nexists \mathcal{G}' \subseteq \mathcal{G}_i$ satisfying $\mathcal{F}_{\mathcal{G}'} > \mathcal{F}_i^{opt}$. 
Then, if  $\mathcal{F}_n^{opt}$ is the maximal value of \{$\mathcal{F}_1^{opt}$, ..., $\mathcal{F}_n^{opt}$ \}, $\mathcal{F}_n^{opt}$ must be the optimum in terms of $\mathcal{F}$ on $\mathbf{G}$.
\end{theorem}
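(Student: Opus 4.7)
The plan is to argue by contradiction: suppose some subgraph $\mathcal{G}^{*} \subseteq \mathbf{G}$ achieves a value $\mathcal{F}_{\mathcal{G}^{*}} > \mathcal{F}_n^{opt}$, and derive a contradiction by showing that $\mathcal{F}_{\mathcal{G}^{*}}$ cannot exceed the maximum of the per-component optima.

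First I would use the fact that the connected components $\{\mathcal{G}_1, \ldots, \mathcal{G}_n\}$ partition $\mathbf{V}$. This lets me partition the node set of $\mathcal{G}^{*}$ as $\mathcal{V}^{*} = \bigsqcup_{i=1}^{n} \mathcal{V}_i^{*}$, where $\mathcal{V}_i^{*} = \mathcal{V}^{*} \cap \mathcal{V}_i$, and let $\mathcal{G}_i^{*}$ be the subgraph of $\mathcal{G}^{*}$ induced by $\mathcal{V}_i^{*}$. The crucial point is that, by definition of connected components, there are no edges in $\mathbf{E}$ between different components, so $\mathcal{E}^{*} = \bigsqcup_{i=1}^{n} \mathcal{E}_i^{*}$ as well. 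Consequently both sums in the numerator of $\mathcal{F}$ split cleanly across components:
\begin{equation*}
\sum_{(u_i,u_j) \in \mathcal{E}^{*}} \sscore{i}{j} + \sum_{u \in \mathcal{V}^{*}} \sscorei \;=\; \sum_{i=1}^{n}\Bigl(\sum_{(u_p,u_q) \in \mathcal{E}_i^{*}} \sscore{p}{q} + \sum_{u \in \mathcal{V}_i^{*}} \sscorei\Bigr).
\end{equation*}

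Next I would express $\mathcal{F}_{\mathcal{G}^{*}}$ as a weighted average of the per-component densities:
\begin{equation*}
\mathcal{F}_{\mathcal{G}^{*}} \;=\; \frac{\sum_{i=1}^{n} \mathcal{F}_{\mathcal{G}_i^{*}} \cdot |\mathcal{V}_i^{*}|}{\sum_{i=1}^{n} |\mathcal{V}_i^{*}|}.
\end{equation*}
Any weighted average is upper-bounded by its largest term, so $\mathcal{F}_{\mathcal{G}^{*}} \leq \max_i \mathcal{F}_{\mathcal{G}_i^{*}}$ (ignoring empty $\mathcal{V}_i^{*}$, which contribute nothing). By definition of $\mathcal{F}_i^{opt}$ as the optimum on $\mathcal{G}_i$, we have $\mathcal{F}_{\mathcal{G}_i^{*}} \leq \mathcal{F}_i^{opt}$ for each $i$. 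Combining these bounds gives $\mathcal{F}_{\mathcal{G}^{*}} \leq \max_i \mathcal{F}_i^{opt} = \mathcal{F}_n^{opt}$, contradicting the assumption.

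The main obstacle is really just to be careful with the decomposition: I need to justify rigorously that $\mathcal{G}^{*}$ has no cross-component edges (immediate from the definition of connected components in $\mathbf{G}$) and to handle the degenerate case of empty $\mathcal{V}_i^{*}$ cleanly so that the weighted-average identity is well-defined. Once these bookkeeping points are settled, the inequality is a one-line consequence of the fact that a convex combination never exceeds its maximum, and the matching lower bound $\mathcal{F}_n^{opt} \leq \max_{\mathcal{G}' \subseteq \mathbf{G}} \mathcal{F}_{\mathcal{G}'}$ is trivial since $\mathcal{G}_n$ itself is a subgraph of $\mathbf{G}$. This pair of inequalities gives equality, establishing the theorem.
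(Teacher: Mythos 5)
Your proof is correct, and it rests on exactly the same underlying fact as the paper's: since connected components share no edges, the numerator of $\mathcal{F}$ splits additively across components, and a ratio of sums cannot exceed the largest of the component-wise ratios. The difference is in how this is packaged. The paper proves only the two-set version of the claim, namely $\mathcal{F}_{[\mathcal{V}_1 \cup \mathcal{V}_2]} \leq \max(\mathcal{F}_{[\mathcal{V}_1]}, \mathcal{F}_{[\mathcal{V}_2]})$ for edge-disconnected $\mathcal{V}_1,\mathcal{V}_2$, via an explicit algebraic manipulation of $\mathcal{F}_{[\mathcal{V}_1]} - \mathcal{F}_{[\mathcal{V}_1\cup\mathcal{V}_2]}$, and then disposes of an arbitrary competitor $\mathcal{V}'$ through a three-way case analysis on how $\mathcal{V}'$ meets $\mathcal{V}_n$ (inside $\mathcal{G}_n$, disjoint from $\mathcal{G}_n$, or mixed). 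You instead decompose an arbitrary candidate subgraph across all $n$ components at once and write its density as a weighted average $\mathcal{F}_{\mathcal{G}^{*}} = \bigl(\sum_i \mathcal{F}_{\mathcal{G}_i^{*}}\,|\mathcal{V}_i^{*}|\bigr)/\bigl(\sum_i |\mathcal{V}_i^{*}|\bigr)$ of per-component densities, so the bound $\mathcal{F}_{\mathcal{G}^{*}} \leq \max_i \mathcal{F}_i^{opt}$ falls out in one step. Your version is cleaner: it absorbs the paper's ``mixed'' case automatically (and the paper's handling of a $\mathcal{V}'$ spanning several components other than $\mathcal{G}_n$ really requires iterating its two-set lemma, a point it glosses over), and it makes the empty-component bookkeeping explicit. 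The one small point worth stating is that $\mathcal{F}_{\mathcal{G}_i^{*}} \leq \mathcal{F}_i^{opt}$ because $\mathcal{G}_i^{*}$ is a subgraph of $\mathcal{G}_i$ and, with all retained scores positive, passing to the induced subgraph on $\mathcal{V}_i^{*}$ can only increase the numerator; with that noted, the argument is complete.
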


\vspace{-1em}
\begin{proof}  
Given any two sets of nodes $\mathcal{V}_1$ and $\mathcal{V}_2$ and assuming there are no edges connecting $\mathcal{V}_1$ and $\mathcal{V}_2$, we assume that $\mathcal{F}_{[\mathcal{V}_1]} > \mathcal{F}_{[\mathcal{V}_2]} \Rightarrow \frac{c_1}{|\mathcal{V}_1|} >  \frac{c_2}{|\mathcal{V}_2|} \Rightarrow c_1|\mathcal{V}_2| > c_2|\mathcal{V}_1| $. Then, 
\begin{displaymath}
\begin{split}
\mathcal{F}_{[\mathcal{V}_1]} - \mathcal{F}_{[\mathcal{V}_1 \cup \mathcal{V}_2]} \Rightarrow
\frac{c_1}{|\mathcal{V}_1|}  - \frac{c_1 + c_2 }{|\mathcal{V}_1| + |\mathcal{V}_2|}\\
\Rightarrow \frac{c_1 |\mathcal{V}_2| - c_2 |\mathcal{V}_1|}{ |\mathcal{V}_1| ( |\mathcal{V}_1| +  |\mathcal{V}_2| )}
>  \frac{c_2 |\mathcal{V}_1| - c_2 |\mathcal{V}_1|}{ |\mathcal{V}_1| ( |\mathcal{V}_1| +  |\mathcal{V}_2| )} = 0.
\end{split}
\end{displaymath}
Thus, for any $\mathcal{V}_1$ and $\mathcal{V}_2$ that are not connected by any edges, 
it follows that $\mathcal{F}_{[\mathcal{V}_1 \cup  \mathcal{V}_2]} \leq \max(\mathcal{F}_{[\mathcal{V}_1]},  \mathcal{F}_{[\mathcal{V}_2]} ) $ (Conclusion 1).

In $\mathcal{G}_n =(\mathcal{V}_n, \mathcal{E}_n)$, we use $\hat{\mathcal{V}}$ to denote the set of nodes satisfying $\hat{\mathcal{V}} \subseteq \mathcal{V}_n  $ and $\mathcal{F}_{[\hat{\mathcal{V}}]} = \mathcal{F}_n^{opt}$.  
Let $\mathcal{V}'$ be a set of nodes satisfying $\mathcal{V}' \subset \mathbf{V}$ and $\mathcal{V}' \cap \hat{\mathcal{V}} = \emptyset$.
Now, let us consider two conditions.

First, if $\mathcal{V}' \subset \mathcal{V}_n$, then $\mathcal{F}_{[\mathcal{V}']} \leq \mathcal{F}_{[\hat{\mathcal{V}}]}$ and  $\mathcal{F}_{[\mathcal{V}' \cup \hat{\mathcal{V}}]} \leq \mathcal{F}_{[\hat{\mathcal{V}}]}$ because $\mathcal{F}_{[\hat{\mathcal{V}}]}$ is the optimum on $\mathcal{G}_n$.

Second, if $\mathcal{V}' \cap  \mathcal{V}_n = \emptyset $, 
then $\mathcal{F}_{[\mathcal{V}']} \leq \mathcal{F}_{[\hat{\mathcal{V}}]}$ and  $\mathcal{F}_{[\mathcal{V}' \cup \hat{\mathcal{V}}]} \leq \mathcal{F}_{[\hat{\mathcal{V}}]}$ by Conclusion 1 and because $\mathcal{F}_{[\hat{\mathcal{V}}]}$ is the maximum of \{$\mathcal{F}_1^{opt}$, ..., $\mathcal{F}_n^{opt}$\}.

If $\mathcal{V}' \cap  \mathcal{V}_n \neq \emptyset $, then $\mathcal{V}'$ can be divided into two parts conforming with the two conditions stated above.  

Therefore, $\nexists \mathcal{V}' \subset \mathbf{V}$ satisfies $\mathcal{F}_{[\mathcal{V}']} > \mathcal{F}_{[\hat{\mathcal{V}}]}$ or  $\mathcal{F}_{[\mathcal{V}' \cup \hat{\mathcal{V}}]} > \mathcal{F}_{[\hat{\mathcal{V}}]}$.  
We can conclude that  $\mathcal{F}_n^{opt}  = \mathcal{F}_{[\hat{\mathcal{V}}]}$ must be the optimum in terms of $\mathcal{F}$ on $\mathbf{G}$.  
\end{proof}

\begin{theorem} \label{gua2}
(Algorithm 2 Guarantee). 
Given a graph $\subg=(\subv,\sube)$,  let $Q^*$ be a subset of nodes maximizing $\mathcal{F}_{[Q^*]}$ in $\subg$. Let $[Q]$ be the subgraph returned by Algorithm 2 with $\mathcal{F}_{[Q]}$. Then,  $\mathcal{F}_{[Q]} \geq \frac{1}{2}  \mathcal{F}_{[Q^*]}$.
\end{theorem}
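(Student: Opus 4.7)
My approach adapts the classical Charikar $\frac{1}{2}$-approximation argument for greedy densest-subgraph to the batch-removal structure of Algorithm~2. The plan is to exhibit a specific snapshot $\mathcal{V}_c^\star$ that Algorithm~2 actually visits and whose induced density provably satisfies $\mathcal{F}_{[\mathcal{V}_c^\star]} \geq \tfrac{1}{2}\,\mathcal{F}_{[Q^*]}$; since $\mathcal{F}_{[Q]} = \mathcal{F}^{max}$ is the maximum over all visited snapshots (by construction of $top$ and $Dict1$ in line~15), the bound then transfers to the returned subgraph.

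The first step is a \emph{local optimality lemma}: for every $v \in Q^*$, $w(v,[Q^*]) \geq \mathcal{F}_{[Q^*]}$. This is a one-step exchange argument --- if removing $v$ from $Q^*$ cut the numerator of $\mathcal{F}$ by $w(v,[Q^*]) < \mathcal{F}_{[Q^*]}$ while cutting the denominator by exactly $1$, the resulting ratio would strictly exceed $\mathcal{F}_{[Q^*]}$, contradicting the optimality of $Q^*$. I will also use a monotonicity observation: $w(v,[\mathcal{V}']) \geq w(v,[Q^*])$ whenever $Q^* \subseteq \mathcal{V}'$, which holds because every $\sscoreij$ and $\sscorei$ retained in $\mathbf{G}$ is non-negative (Section~\ref{sec:computing}), so enlarging the vertex set can only add non-negative terms.

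Next I identify the critical iteration. Let $\mathcal{V}_c^\star$ be the value of $\mathcal{V}_c$ at the start of the first outer iteration whose batch $R$ (line~6) contains some node $v^\star \in Q^*$. By the choice of this iteration, no element of $Q^*$ has yet been deleted, so $Q^* \subseteq \mathcal{V}_c^\star$. The membership $v^\star \in R$ yields $Dict2[v^\star] = w(v^\star,[\mathcal{V}_c^\star]) \leq \overline{w} \leq 2\,\mathcal{F}_{[\mathcal{V}_c^\star]}$ by Eq.~\ref{eq:avew}. Chaining with the lemma and monotonicity,
\[
\mathcal{F}_{[Q^*]} \;\leq\; w(v^\star,[Q^*]) \;\leq\; w(v^\star,[\mathcal{V}_c^\star]) \;\leq\; 2\,\mathcal{F}_{[\mathcal{V}_c^\star]}.
\]
Finally, $\mathcal{V}_c^\star$ is either the initial graph (whose density seeds $\mathcal{F}^{max}$ in line~4) or the state left at the end of the previous batch (whose density was computed in line~11 after that batch's last inner removal). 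In both cases $\mathcal{F}^{max} \geq \mathcal{F}_{[\mathcal{V}_c^\star]}$, so $\mathcal{F}_{[Q]} = \mathcal{F}^{max} \geq \tfrac{1}{2}\,\mathcal{F}_{[Q^*]}$.

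The part I expect to require the most care --- and the main point of departure from Charikar's single-deletion argument --- is exactly this batch bookkeeping: the inner loop pops nodes from $R$ one at a time while mutating $Dict2$, so the inequality must be pinned to the \emph{batch-start} snapshot $\mathcal{V}_c^\star$ rather than to the (possibly strictly smaller) $\mathcal{V}_c$ in effect at the moment $v^\star$ is actually removed, since the threshold condition $w(v^\star,[\mathcal{V}_c]) \leq 2\,\mathcal{F}_{[\mathcal{V}_c]}$ was only verified against that batch-start state, and one must separately certify that this snapshot's density was indeed examined against $\mathcal{F}^{max}$.
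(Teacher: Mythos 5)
Your proposal is correct and follows essentially the same route as the paper's own proof: the local-optimality lemma $w(v,[Q^*])\geq\mathcal{F}_{[Q^*]}$, the monotonicity of $w$ under supersets, and pinning the threshold inequality to the batch-start snapshot (the paper's $Q'$) before chaining to $\mathcal{F}^{max}$. Your extra care in certifying that the batch-start snapshot's density is actually compared against $\mathcal{F}^{max}$ is a welcome tightening of a step the paper states only implicitly, but it is not a different argument.
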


\begin{proof}
Consider the optimal set $Q^*$. We know that, $\forall u\in Q^*$,  $w(u, [Q^*]) \geq \fscore{[Q^*]}$, because if we remove a node $u$ for which $w(u, [Q^*])<\fscore{[Q^*]}$,
\[\mathcal{F}'=\frac{|Q^*|\fscore{[Q^*]}-w(u, Q^*)}{|Q^*|-1}
> \frac{|Q^*|\fscore{Q^*}-\fscore{Q^*}}{|Q^*|-1}=\fscore{[Q^*]},
 \]
which contradicts the definition of $Q^*$.

Denote the first node that Algorithm 2 removes from $Q^*$ as $\user{i}$, $u_i \in R$, and denote the node set before Algorithm 2 starts removing $R$ as $Q'$. Because $Q^* \subseteq Q'$, we have $w(u_i, [Q^*])\leq w(u_i, [Q'])$. According to Algorithm 2 (line 6), $w(u_i, [Q']) \leq \  2\fscore{[Q']}$ (Eq. 10). Additionally, Algorithm 2 returns the best solution when deleting nodes one by one, and so $\fscore{[Q]}\geq\fscore{[Q']}$. We conclude that
\begin{displaymath}
\fscore{[Q]}\geq \fscore{[Q']}\geq \frac{w(u_i, [Q'])}{2}\geq \frac{w(u_i, [Q^*])}{2}\geq \frac{\fscore{Q^*}}{2}.
\end{displaymath}
\end{proof}

In summary, let $\{\mathcal{G}_1,..., \mathcal{G}_n\}$ be the subgraphs returned by D-Spot, and $\{\mathcal{F}_{\mathcal{G}_1},..., \mathcal{F}_{\mathcal{G}_n}\}$ be the corresponding scores. Then, based on Theorems \ref{gua1} and \ref{gua2}, $\mathcal{F}^{max} = max(\mathcal{F}_{\mathcal{G}_1},..., \mathcal{F}_{\mathcal{G}_n})$ is at least $1/2$ of the optimum in terms of $\mathcal{F}$ on $\mathbf{G}$ ($1/2$-Approximation guarantee).

\begin{table*}[t]
	\caption{Multi-dimensional datasets used in our experiments}\label{tab:detail}
	\centering
	\begin{tabular}{c|c|c|c|c|c|c|c|c|c}
		\toprule
		&Synthetic &\multicolumn{2}{|c|}{TCP Dumps}&\multicolumn{6}{|c}{Review Data}\\
		\midrule  
		& & DARPA & AirForce & YelpChi & YelpNYU & YelpZip & AmaOffice& AmaBaby & AmaTools \\ 
		\midrule  
		Entries (Mass) &10K & 4.6M  & 30K & 67K & 359K & 1.14M & 53K & 160K & 134K   \\
		Dimensions& 7& 3 & 3 & 3 & 3 & 3 & 3 & 3 & 3\\ 
		\bottomrule
	\end{tabular}
	\vspace{-1em}
\end{table*}

\section{Evaluation}
\label{sec:eval}

A series of evaluation experiments were conducted under the following conditions:

\para{Implementation.} We implemented ISG+D-Spot in Python, and conducted all experiments on a server with two 2.20 GHz  Intel(R) CPUs and 64 GB memory.

\para{Baselines.}
We selected several state-of-the-art dense-block detection methods  (M-Zoom~\cite{MZOOM}, M-Biz \cite{MBIZ}, and D-Cube~\cite{DCUBE}) as the baselines (using open source code). To obtain optimal performance, we run three different density metrics from \cite{DCUBE} for each baseline: $\rho$ \textbf{(ari)}, Geometric Average Mass \textbf{(geo)}, and Suspiciousness \textbf{(sus)}.

\para{Suspiciousness Score Setting}. For the baselines, we considered a detected block $\mathcal{B}(\mathcal{A}_1,...,\mathcal{A}_N, \mathcal{X})$ and let $\theta = \rho(\mathcal{B}, N)$. For any unique value $a$ within $\mathcal{B}$, we then set the suspiciousness score of $a$ to $\theta$. If $a$ occurred in multiple detected dense blocks, we chose the one with the maximal value of $\theta$.  Regarding ISG+D-Spot, given a detected subgraph $\subghat = (\hat{\mathcal{V}}, \hat{\mathcal{E}})$, for a unique value $a \in \hat{\mathcal{V}}$, we set the suspiciousness score of $a$ to $w(a, \subghat)$ (Eq. \ref{eq:w}). Finally, we evaluated the ranking of the suspiciousness scores of unique values using the standard area under the ROC curve (AUC) metric.

%

\vspace{-0.5em}
\subsection{Datasets}
Table \ref{tab:detail} summarizes the datasets used in the experiments.

\para{Synthetic} is a series of datasets we synthesized using the same method as in~\cite{crossspot}. 
First, we generated random seven-dimensional relations $\mathbf{R}(A_1, ... ,A_7, X)$, in which $|\mathbf{R}| = 10000$ and the size of $\mathbf{R}$ is $1000 \times 500 \times ... \times 500$. In $\mathbf{R}$, we assume that $A_1$ corresponds to users and the other six dimensions are features. To specifically check the detection performance of the hidden-densest block using each method, we injected a dense block $\mathcal{B}(\mathcal{A}_1,..., \mathcal{A}_7, \mathcal{X})$ into $\mathbf{R}$ five separate times, with each injection assigned a different configuration to generate five datasets. 
For $\mathcal{B}$,  $|\mathcal{B}_1| = 50$ and $|\mathcal{B}| = 500$. We introduce the parameter $\lambda$, which denotes the number of dimensions on which $\mathcal{B}$ is the densest.  For example, when $\lambda = 1$, the size of $\mathcal{B}$ is $50 \times$\textbf{12}$\times 25 ... \times 25$; when $\lambda = 5$, the size of $\mathcal{B}$ is $50 \times$\textbf{12}$\times ... \times$ \textbf{12} $\times 25$. Obviously, $\rho(\mathcal{B}, 7) >  \rho(\mathbf{R}, 7)$ and $\mathcal{B}$ is the hidden-densest block when $\lambda$ is small. Finally, we labeled the users within $\mathcal{B}$ as ``fraud''.

\para{Amazon}~\cite{amazon_data}. AmaOffice, AmaBaby, and AmaTools are three collections of reviews about office products, baby-related products, and tool products, respectively, on Amazon. They can be modeled using the relation $\mathbf{R}(user, product,  timestamp,  X)$.  For each entry $t \in \mathbf{R}$, $t= (u, p, t, x)$ indicates a review $x$ that user $u$ reviewed product $p$ at time $t$. According to the specific cases of fraud discovered by previous studies \cite{COPYCATCH,crossspot}, fraudulent groups  usually exhibit suspicious  synchronized behavior in social networks. For instance, a large group of users may surprisingly review the same group of products over a short period. Thus, we use a similar method as in \cite{COPYCATCH, crossspot, MZOOM,DCUBE}. We use a dense block $\mathcal{B}$ to represent the synchronized behavior, where $\mathcal{B}(user, product, timestamp, \mathcal{X})$ has a size of $200 \times 30 \times 1$. In total, we injected four such blocks $\mathcal{B}$ with a mass randomly selected from $[1000, 2000]$. The users in the injected blocks were labeled as ``malicious.''   

\para{Yelp}~\cite{YELP}. The YelpChi, YelpNYU, and YelpZip datasets \cite{YELP2,YELP} contain restaurant reviews submitted to Yelp. They can be represented by the relation $\mathbf{R}(user, restaurant, date, X)$, where each entry $t = (u, r, d, x)$ denotes a review $x$ by user $u$ of restaurant $r$ on date $d$. 
Note that all three datasets include labels indicating whether or not each review is fake. The detection of malicious reviews or users is studied in \cite{YELP} using text information. In these datasets, we focus on detecting fraudulent restaurants that purchase fake reviews using the three-dimensional features. Intuitively, the more fake reviews a restaurant has, the more suspicious it is. As some legitimate users have the potential of reviewing fraudulent restaurants, we label a restaurant as ``fraudulent'' if it has received more than 40 fake reviews.

\para{DARPA}~\cite{DARPA} was collected by the Cyber Systems and Technology Group in 1998 regarding network attacks in TCP dumps. The data has the form $\mathbf{R}(source IP, target IP, timestamp, X)$. Each entry $t = (IP_1, IP_2 , t)$ represents a connection made from $IP_1$ to $IP_2$ at time $t$. The dataset includes labels indicating whether or not each connection is malicious. In practice, the punishment for malicious connections is to block the corresponding IP address. Thus, we compared the detection performance of suspicious IP addresses. We labeled an IP address as suspicious if it was involved in a malicious connection.   

\para{AirForce}~\cite{AIRFORCE} was used for the KDD Cup 1999, and has also been considered in \cite{MZOOM,DCUBE}.  This dataset includes a wide variety of simulated intrusions in a military network environment. However, it does not contain any specific IP addresses.  According to the cardinality of each dimension, we chose the top-2 features and built the relation $\mathbf{R}(src \ bytes, dst \ bytes, connections, X)$, where src bytes denotes the number of data bytes sent from source to destination and dst bytes denotes the number of data bytes sent from destination to source. The target dimension $U$ was set to be $connections$. Note that this dataset includes labels indicating whether or not each connection is malicious.

\subsection{Speed and Accuracy of D-Spot} \label{sec:speed}
First, we measured the speed and accuracy with which D-Spot detected dense subgraphs in real-world graphs. We compared the performance of D-Spot with that of another dense-subgraph detection algorithm, Fraudar \cite{FRAUDAR}, the extension of \cite{greedy}, which maximizes the density metric by greedily deleting nodes \textbf{one by one}. We used the three Amazon datasets and applied D-Spot and Fraudar to the same bipartite graph built on the first two dimensions, \emph{users} and \emph{products}, where each edge in the graph represents an entry. We measured the wall-clock time (average over three runs) required to detect the top-4 subgraphs. Figure~\ref{fig:2mode} illustrates the runtime and performance of the two algorithms.

\begin{figure}[h]
\setlength{\belowcaptionskip}{-0.2cm} 
	\includegraphics[width=0.8\columnwidth, height = 0.25\columnwidth]{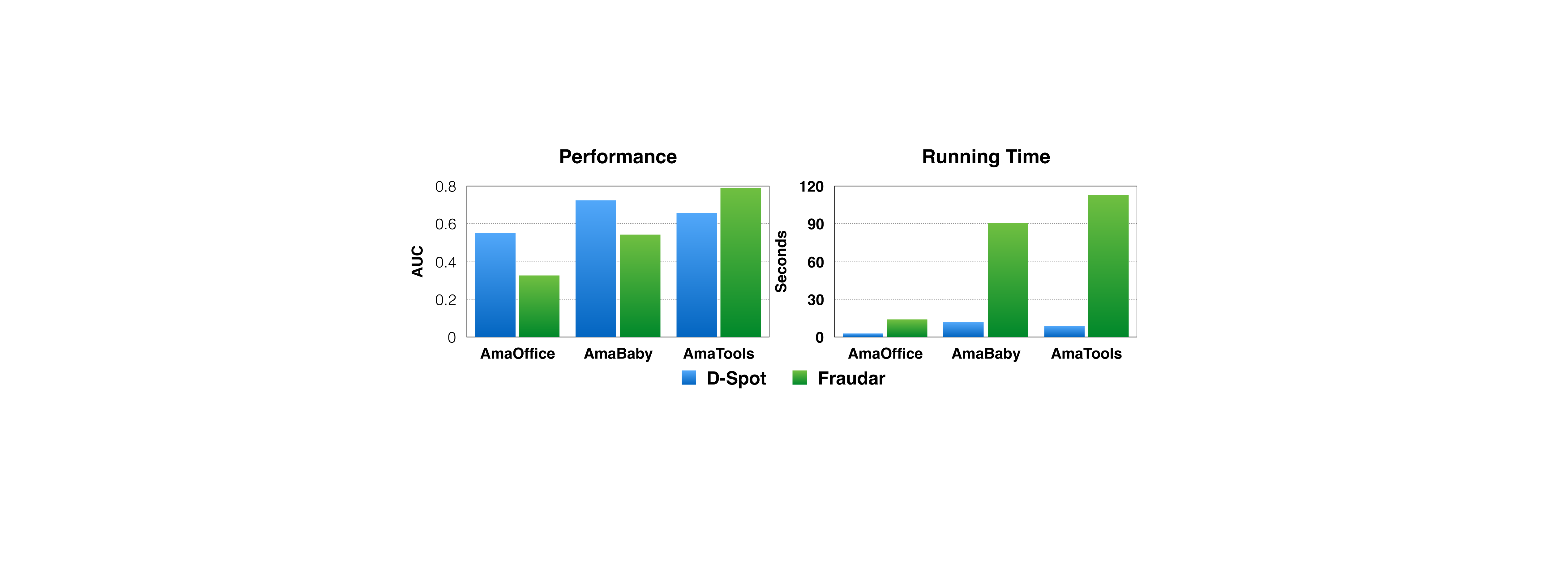}
	\centering
	\vspace{-0.2cm}
	\caption{Comparison of D-Spot and Fraudar  using the Amazon datasets. }
	\label{fig:2mode}
	\vspace{-0.2cm}
\end{figure}

D-Spot provides the best trade-off between speed and accuracy. Specifically, D-Spot is up to  11$\times$ faster than Fraudar. This supports our claim in Section \ref{sec:bads} that the worst-case time complexity of D-Spot ($O( |\mathcal{V}|^2 + |\mathcal{E}|)$) is too pessimistic. 

\subsection{Effectiveness of ISG+D-Spot}
This section illustrates the effectiveness of ISG+D-Spot for detecting fraudulent entities on multi-dimensional tensors. ISG+D-Spot exhibits extraordinary performance compared with the baseline methods (Fraudar is not in the baselines as it only works on the bipartite graph).   

\para{Synthetic.} Table \ref{tab:synthetic} presents the detection performance of each method for the hidden-densest block. We assume that the injected block $\mathcal{B}$ is the hidden-densest block when $\lambda \leq 3$. In detail, ISG+D-Spot achieves extraordinary performance even when $\lambda = 1$, because each instance of value sharing in $\mathcal{B}$ is accurately captured by ISG and  D-Spot, providing a higher accuracy guarantee than the baselines (Theorems 2 and 3). When $\lambda > 3$, the performance of each method improves because the density of $\mathcal{B}$ increases as $\lambda$ increases.

\begin{table}[h]
	\caption{Performance (AUC) on the Synthetic datasets }\label{tab:synthetic}
	\vspace{-1em}
	\centering
	\begin{tabular}{c|ccccc}
		\toprule
		& $\lambda =1  $ & $\lambda =2$& $\lambda =3$&$\lambda =4$&$\lambda =5$ \\
		\midrule  
		\emph{M-Zoom (ari)}&0.5005  &0.5005&0.5000&0.5489 & 0.6567 \\
		\emph{M-Zoom (geo)} & 0.5005  &0.5005 &0.5000&0.6789 &0.7543 \\
		\emph{M-Zoom (sus)}  & 0.7404  & 0.7715 &0.8238&0.9685 &0.9767 \\
		\emph{M-Biz (ari)}   & 0.5005 & 0.5005 &0.5005&0.5005 &0.6834 \\
		\emph{M-Biz (geo)}   & 0.5005 & 0.5005&0.5005&0.6235 &0.7230 \\
		\emph{M-Biz (sus)}   & 0.6916 & 0.7638 & 0.8067&0.9844 &\textbf{0.9948} \\
		\emph{D-Cube (ari)} &0.5005 &0.5005& 0.5005&0.5670  &0.6432 \\
		\emph{D-Cube (geo)} &0.5005 &0.5005 & 0.5340 &0.6876 &0.6542 \\
		\emph{D-Cube (sus)} &\textbf{0.8279} & \textbf{0.8712}&\textbf{0.9148} &\textbf{0.9909} &0.9725  \\
		\midrule  
		ISG+D-Spot & \textbf{0.9843} & \textbf{0.9957} & \textbf{0.9949} &\textbf{1.0000} & \textbf{1.0000}\\
		\bottomrule
		
	\end{tabular}
	\vspace{-0.5em}
\end{table}

\para{Amazon.} Table \ref{tab:amazon} presents the results for catching suspicious users by detecting the top-4 dense blocks on the Amazon datasets. ISG+D-Spot detects synchronized behavior accurately. The typical attack scenario involves a mass of fraudulent users creating massive numbers of fake reviews for a comparatively small group of products over a short period. This behavior is represented by the injected blocks. ISG+D-Spot exhibits robust and near-perfect performance. However, the other baselines produce worse performance on the AmaOffice and AmaBaby datasets, even with the multiple supported metrics.  

\begin{table}[h]
	\caption{Performance (AUC) on the Amazon datasets }\label{tab:amazon}
		\vspace{-1em}
	\centering
	\begin{tabular}{c|cccc}
		\toprule
		& AmaOffice & AmaBaby& AmaTools \\
		\midrule  
		\emph{M-Zoom (ari)}& 0.6795  &0.5894& 0.8689  \\
		\emph{M-Zoom (geo)} & 0.8049  & 0.8049 &1.0000 \\
		\emph{M-Zoom (sus)}  & 0.7553  & 0.6944 &0.6503 \\
		\emph{M-Biz (ari)}   & 0.6328 & 0.5461 &0.8384\\
		\emph{M-Biz (geo)}   & 0.8049 & \textbf{0.8339}& 1.0000 \\
		\emph{M-Biz (sus)}   & 0.7478 & 0.6944 & 0.6503 \\
		\emph{D-Cube (ari)} & 0.7127 &0.5956& 0.6250  \\
		\emph{D-Cube (geo)} &\textbf{0.8115 }&0.7561 & 1.0000  \\
		\emph{D-Cube (sus)} & 0.7412 &0.6190& 0.5907  \\
		\midrule  
		ISG+D-Spot & \textbf{0.8358} & \textbf{0.9995} & \textbf{1.0000}\\
		\bottomrule
	\end{tabular}
	\vspace{-1em}
\end{table}

\para{Yelp.} Table \ref{tab:yelp} reports the (highest) accuracy with which collusive restaurants were detected by each method. In summary, using ISG+D-Spot results in the highest accuracy across all three datasets, because D-Spot applies a higher theoretical bound to the ISG.  
\vspace{-0.5em}
\begin{table}[h]
	\caption{Performance (AUC) on the Yelp datasets}\label{tab:yelp}
		\vspace{-1em}
	\centering
	\begin{tabular}{l|ccc}
		\toprule
		&YelpChi & YelpNYU & YelpZip\\
		\midrule  		  
		\emph{M-Zoom (ari)} & 0.9174 & 0.6669 &0.8859 \\
		\emph{M-Zoom (geo)} & 0.9752 & 0.8826 & 0.9274 \\
		\emph{M-Zoom (sus)} & 0.9831 & \textbf{0.9451} & 0.9426 \\
		\emph{M-Biz (ari)}  & 0.9174 & 0.6669 & 0.8863 \\
		\emph{M-Biz (geo)}  & 0.9757 & 0.8826 &0.9271 \\
		\emph{M-Biz (sus)}  &\textbf{0.9831} & 0.9345 & \textbf{0.9403} \\
		\emph{D-Cube (ari)} &0.5000 & 0.5000& 0.9033  \\
		\emph{D-Cube (geo)} &0.9793 & 0.9223& 0.9376 \\
		\emph{D-Cube (sus)} &0.9810 & 0.9007 &0.9365 \\
		\midrule  
		ISG+D-Spot & \textbf{0.9875}& \textbf{0.9546}& \textbf{0.9529} \\
		\bottomrule
	\end{tabular}
	\vspace{-1em}
	
\end{table}

\para{DARPA.} Table \ref{tab:darpa} lists the accuracy of each method for detecting the source IP and the target IP. ISG+D-Spot assigns each IP address a specific suspiciousness score. We chose a detected IP with the highest score and found that the IP participated in more than 1M malicious connections. The top ten suspicious IPs were all involved in more than 10k malicious connections. Thus, using ISG+D-Spot would enable us to crack down on these malicious IP addresses in the real world.

\begin{table}[h]
	\caption{Performance (AUC) on the DARPA dataset}\label{tab:darpa}
		\vspace{-1em}
	\centering
	\begin{tabular}{c|cc}
		\toprule
		Dataset &\multicolumn{2}{|c}{DARPA}\\
		\midrule  
		&U = source IP & U = target IP \\
		\midrule  
		\emph{M-Zoom (ari)}& 0.5649  &0.5584 \\
		\emph{M-Zoom (geo)} &0.7086  &\textbf{0.5714}  \\
		\emph{M-Zoom (sus)}  &0.6989  &0.3878 \\
		\emph{M-Biz (ari)}   &0.5649 &0.5584  \\
		\emph{M-Biz (geo)}   &\textbf{0.7502} &0.5679  \\
		\emph{M-Biz (sus)}   &0.6989 &0.3878  \\
		\emph{D-Cube (ari)} &0.3728 &0.5323  \\
		\emph{D-Cube (geo)} &0.4083 &0.3926   \\
		\emph{D-Cube (sus)} &0.4002 &0.3720  \\
		\midrule  
		ISG+D-Spot & \textbf{0.7561} & \textbf{0.8181}\\
		\bottomrule	
	\end{tabular}
	\vspace{-0.5em}
\end{table}

\para{AirForce.} As this dataset does not contain IP addresses, we set the target dimension $U = connections$. We randomly sample 30k connections from the dataset \cite{AIRFORCE} three times. Table \ref{tab:airforce} lists the accuracy of each method on samples 1--3. Malicious connections form dense blocks on the two-dimensional features. The results demonstrate that ISG+D-Spot effectively detected the densest blocks.

\vspace{-0.5em}
\begin{table}[h]
	\caption{Performance (AUC) on the AirForce dataset}\label{tab:airforce}
		\vspace{-1em}
	\centering
	\begin{tabular}{c|cc|cc}
		\toprule
		&Sample 1 & Sample 2   & Sample 3\\
		\midrule  
		\emph{M-Zoom (ari)}&0.8696  &0.8675  & 0.8644 \\
		\emph{M-Zoom (geo)} &0.9693  & 0.9693  &0.9738 \\
		\emph{M-Zoom (sus)}  &0.9684  & 0.9683  & 0.9726 \\
		\emph{M-Biz (ari)}   &0.9038 & 0.8848  & 0.8885\\
		\emph{M-Biz (geo)}   &0.9694 & 0.9693  & 0.9741 \\
		\emph{M-Biz (sus)}   &0.9684 & 0.9683  & 0.9726 \\
		\emph{D-Cube (ari)} &\textbf{0.9824} &\textbf{0.9823}& 0.9851  \\
		\emph{D-Cube (geo)} &0.9695 &0.9691  &\textbf{0.9862} \\
		\emph{D-Cube (sus)} &0.9697 &0.9692  & 0.9737\\
		\midrule  
		ISG+D-Spot & \textbf{0.9835} & \textbf{0.9824}& \textbf{0.9877}\\
		\bottomrule
		
	\end{tabular}
	\vspace{-1em}
\end{table}

\vspace{-0.5em}
\subsection{Scalability} 
As mentioned in Sec.\ref{sec:com}, the ISG $\mathbf{G}$s built using real-world tensors are typically sparse, as value sharing should only conspicuously appear in fraudulent entities.
 We implemented $\mathbf{G}$ on the three Amazon datasets (details in Figure \ref{fig:time}). 
The edge densities of $\mathbf{G}$ are quite low (lower than 0.06) across all datasets, which indicates that the worst-case time complexity discussed in Section \ref{sec:com} rarely occurs. Figure \ref{fig:time} reports the runtime of ISG+D-Spot on the three Amazon datasets, where the number of edges was varied by subsampling entries in the dataset. In practice, $|\mathbf{E}|$ increases near-linearly with the mass of the dataset. Additionally, because the time complexity of  ISG+D-Spot is linear with respect to $|\mathbf{E}|$, 
ISG+D-Spot exhibits near-linear scaling with the mass of the dataset. Figure \ref{fig:time} demonstrates our conclusion.

\begin{figure}[h] 
    \includegraphics[width=0.8\columnwidth, height=0.3\columnwidth]{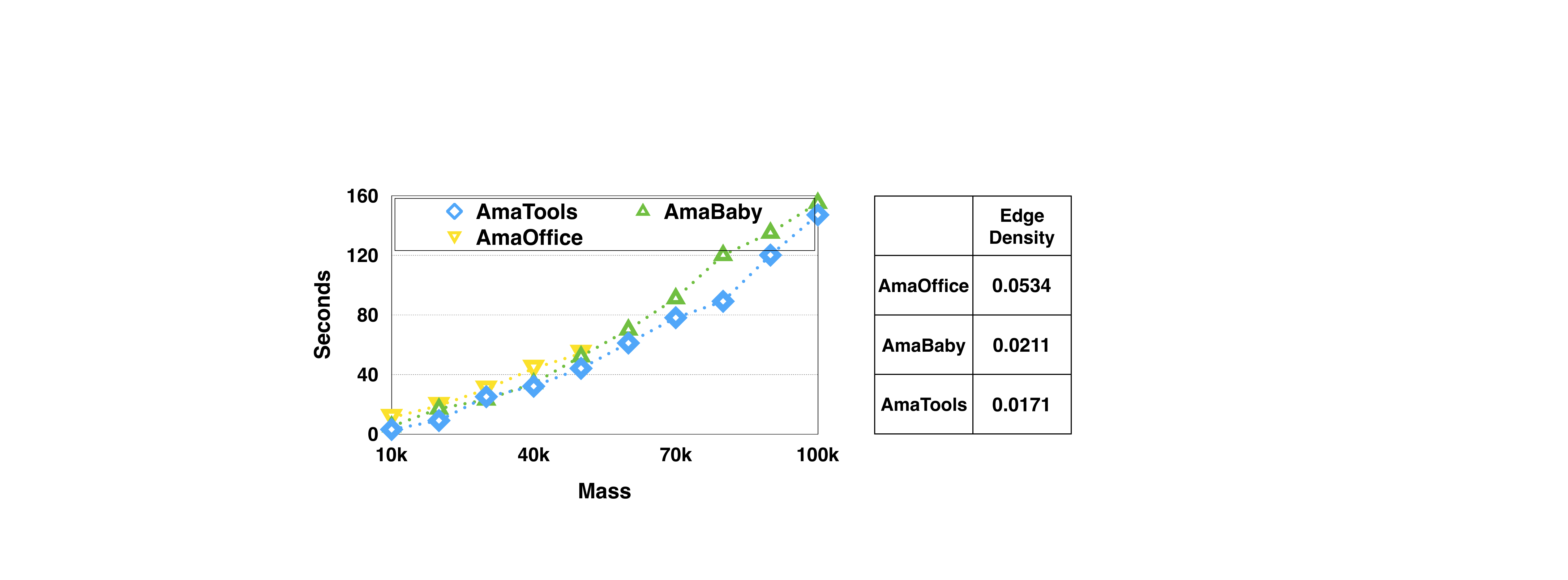}
    \vspace{-1em}
    \caption{ ISG+D-Spot runs in near-linear time with respect to the mass of the dataset.}
       \label{fig:time}
    \vspace{-1.5em}
\end{figure}

\subsection{Feature Prioritization}

This section is to demonstrate that ISG+D-Spot is more robust to resist noisy features than existing approaches. ISG automatically weighs each feature and continuously accumulates value sharing by one scan of the tensor, and D-Spot amounts to finds entities with the maximum of value sharing. We conducted the following experiment to demonstrate our conclusion.

\para{Registration} is a dataset derived from an e-commerce company, in which each record contains two crucial features, IP subnet and phone prefix, and three noisy features, IP city, phone city, and timestamp.  The dataset also includes labels showing whether or not the account is a ``zombie'' account. Thus, it can be formulated as $\mathbf{R}($\textit{accounts, IP, phone, IP \ city, phone \ city, timestamp}, $X)$. To compare the detection performance of malicious accounts,  we applied each method on various $\mathbf{R}$ by successively appending 1--5 features to $\mathbf{R}(accounts, X)$.

\vspace{-1em}
\begin{table}[h]
	\caption{Performance (AUC) on the Registration dataset. `C' represents `crucial feature' and `N' represents `noisy feature' }\label{tab:regi}
	\vspace{-0.4cm}
	\centering
	\begin{tabular}{c|ccccc}
		\toprule
		& 1C   & 2C   & 2C+1N & 2C+2N & 2C+3N\\
		\midrule  
		\emph{M-Zoom (ari)}&0.5000  &0.5000  & 0.5031 & 0.5000 & 0.5430   \\
		\emph{M-Zoom (geo)} &0.7676  & 0.8880  & \textbf{0.8827} & \textbf{0.8744} & \textbf{0.8439}  \\
		\emph{M-Zoom (sus)}  &0.7466  & 0.8328  &0.4009 & 0.4878& 0.4874\\
		\emph{M-Biz (ari)}   &0.5000 & 0.5000  & 0.5000 & 0.5000&0.5004  \\
		\emph{M-Biz (geo)}   &\textbf{0.7677} & 0.8842  & 0.8827 & 0.8744& 0.8439  \\
		\emph{M-Biz (sus)}   &0.7466 & 0.8328  & 0.4009 & 0.4878 & 0.4874  \\
		\emph{D-Cube (ari)} &0.7073 &0.8189 & 0.8213 & 0.8295 & 0.7987  \\
		\emph{D-Cube (geo)} &0.7073 & \textbf{0.9201}  &0.8586 &  0.8312 & 0.7324 \\
		\emph{D-Cube (sus)} &0.7522 &0.8956  &0.7877 & 0.7642 & 0.7080 \\
		\midrule  
		ISG+D-Spot & \textbf{0.7699} & \textbf{0.9946}& \textbf{0.9935} &\textbf{0.9917} &\textbf{0.9859} \\
		\bottomrule
		
	\end{tabular}
	\vspace{-1em}
	
\end{table}

Table \ref{tab:regi} gives the variation of each method with regard to the added noisy features (3-- 5 dimensions). As each account only possesses one entry, $\mathbf{R}$ is quite sparse. We found that existing methods usually miss small-scale instances of value sharing because their density is close to the legitimate range on $\mathbf{R}$. For example, a 51-member group sharing a single IP subnet was missed by the baseline methods. However, ISG amplifies each instance of value sharing through its information-theoretic and graph features,  allowing D-Spot to accurately capture fraudulent entities.

\section{Conclusions and future work}
\label{sec:conclusion}

In this paper, we novelly identified dense-block detection with dense-subgraph mining, by modeling a tensor in ISG.
Additionally, we propose a multiple dense-subgraphs detection algorithm that is faster and can be computed in parallel.
In future, ISG + D-Spot will be implemented on Apache Spark \cite{SPARK} to support very large tensors. 

\bibliographystyle{ACM-Reference-Format}
\balance 
\bibliography{reference}

\end{document}